\newcommand{\paratitle}[1]{\vspace{1.1ex}\noindent\textbf{#1}}
\newtheorem{assumption}{Assumption}
\newtheorem{theorem}{Theorem}
\newtheorem{lemma}{Lemma}
\newcommand{\noaistats}[1]{}
\newcommand{\SUB}[1]{\ENSURE \hspace{-0.3in} \textbf{#1}}
\def\BibTeX{{\rm B\kern-.05em{\sc i\kern-.025em b}\kern-.08em
    T\kern-.1667em\lower.7ex\hbox{E}\kern-.125emX}}
\begin{document}
\title{Learnable Sparse Customization in Heterogeneous Edge Computing}

\author{\IEEEauthorblockN{Jingjing Xue$^{1, 2}$, Sheng Sun$^{1}$, Min Liu$^{1,2,5,*}$\thanks{* Min Liu is the Corresponding author.}, Yuwei Wang$^{1}$, Zhuotao Liu$^{3,5}$, Jingyuan Wang$^{4}$}
\IEEEauthorblockA{$^{1}$Institute of Computing Technology, Chinese Academy of Sciences, Beijing, China \\
$^{2}$University of Chinese Academy of Sciences, Beijing, China \\
$^{3}$Institute for Network Sciences and Cyberspace, Tsinghua University, Beijing, China \\
$^{4}$School of Computer Science and Engineering, Beihang University, Beijing, China \\
$^{5}$Zhongguancun Laboratory, Beijing, China\\
\IEEEauthorblockA{Email: \{xuejingjing20g, sunsheng, liumin, ywwang\}@ict.ac.cn, zhuotaoliu@tsinghua.edu.cn, jywang@buaa.edu.cn}
}}

\maketitle
\thispagestyle{plain}

\begin{abstract}
To effectively manage and utilize massive distributed data at the network edge, Federated Learning (FL) has emerged as a promising edge computing paradigm across data silos. However, FL still faces two challenges: system heterogeneity (\emph{i.e.}, the diversity of hardware resources across edge devices) and statistical heterogeneity (\emph{i.e.}, non-IID data). Although sparsification can extract diverse submodels for diverse clients, most sparse FL works either simply assign submodels with artificially-given rigid rules or prune partial parameters using heuristic strategies, resulting in inflexible sparsification and poor performance. In this work, we propose Learnable Personalized Sparsification for heterogeneous Federated learning (FedLPS), which achieves the learnable customization of heterogeneous sparse models with importance-associated patterns and adaptive ratios to simultaneously tackle system and statistical heterogeneity. Specifically, FedLPS learns the importance of model units on local data representation and further derives an importance-based sparse pattern with minimal heuristics to accurately extract personalized data features in non-IID settings. Furthermore, Prompt Upper Confidence Bound Variance (P-UCBV) is designed to adaptively determine sparse ratios by learning the superimposed effect of diverse device capabilities and non-IID data, aiming at resource self-adaptation with promising accuracy. Extensive experiments show that FedLPS outperforms status quo approaches in accuracy and training costs, which improves accuracy by 1.28\%-59.34\% while reducing running time by more than 68.80\%.
\end{abstract}

\begin{IEEEkeywords}
	Edge Computing, Federated Learning, Model Sparsification, System and Data Heterogeneity
\end{IEEEkeywords}

\section{Introduction}

The proliferation of mobile and IoT devices drives the significant growth of data generated at the network edge \cite{ijcai2024p245, 9385890}, which, coupled with the demand for  real-time data processing and privacy protection, has fueled the rise of edge data management and utilization \cite{Frontier_EDM_2018, EDM_2020, data_mana_2023}. Traditional centralized methods require edge devices to upload such a huge amount of data for centralized processing, which tends to exhaust network capacity and bring unacceptable transferring latency \cite{edge_intelligence_2021, FedMigr_2022}. Besides, raw data uploading takes the risk of user privacy leakage and unauthorized access \cite{FedADMM_2022, FedCross_2024}. Faced with these limitations, the landscape of data management has significantly changed, giving rise to embracing Mobile Edge Computing (MEC) in distributed data management across edge users \cite{EC_DM_2022}. MEC pushes data storage and model computing to network edges at the source instead of raw data transmission, enabling real-time processing and safeguarding privacy. As a distributed edge computing paradigm, FL \cite{FL_MEC_2020, Fed_Survey_2023, FL_Survey_2024} allows devices to locally process data and jointly compute a shared model without raw data sharing, which has become a promising solution for privacy-preserving edge data management and utilization \cite{FL_DM_2023, FedMix_2024}. In FL, edge devices (\emph{i.e.}, clients) locally update models on their own data and periodically upload local updates to the server for aggregating into a global model.

Despite the benefits of data localization and privacy protection, FL still faces two key challenges:
(1) \textbf{System heterogeneity} highlights that different edge devices possess diverse resource configurations, which limits clients to train models matching their capabilities, resulting in severe performance gaps \cite{HeteroFL_Survey_2024, FedLMT_2024}. (2) \textbf{Statistical heterogeneity} focuses on non-IID data among clients \cite{non-IID_2024, FedAPEN_2023}, which leads to apparent inconsistencies between local updates \cite{inconsistency_update}, further degrading model generalization \cite{FedRep_2021, FedCP_2023}. The simultaneous existence of such dual heterogeneity brings additive bottlenecks in processing efficiency and performance, hindering the deployment of FL in practical Edge Data Management (EDM) and MEC scenarios.

Prior works prioritize clients with powerful capabilities \cite{Oort_2021} and tolerate stale updates \cite{REFL_2023}, which introduces training bias due to poor fairness and sacrifices accuracy considering global model drift. While some studies \cite{FedProx_2020, FedDyn_2021} limit the inconsistencies between local updates and global models to ensure convergence on non-IID data, which still suffer from heavy burden. The primary drawback of both directions is the identical model architecture across all clients, which incurs stragglers slowing down the FL process and brings inference accuracy gaps. Therefore, different local models are advocated to match heterogeneous settings of edge devices \cite{FedRolex_2022}. 

Sparsification is a promising solution for heterogeneous model extraction, which can prune different parameters from the global model to build diverse submodels. A sparse model involves two determining factors: (1) \textbf{Sparse ratio} indicates how many parameters are retained after sparsification. (2) \textbf{Sparse pattern} points out which parameters are removed, which determines the submodel structure. It has been verified that changes in both sparse ratio and pattern yield notable differences in resource costs and model accuracy \cite{FedMP_2022, FedBIAD_2023, FlexFL_2024}.

\begin{figure}
	\centering
	\includegraphics[width=0.92\columnwidth]{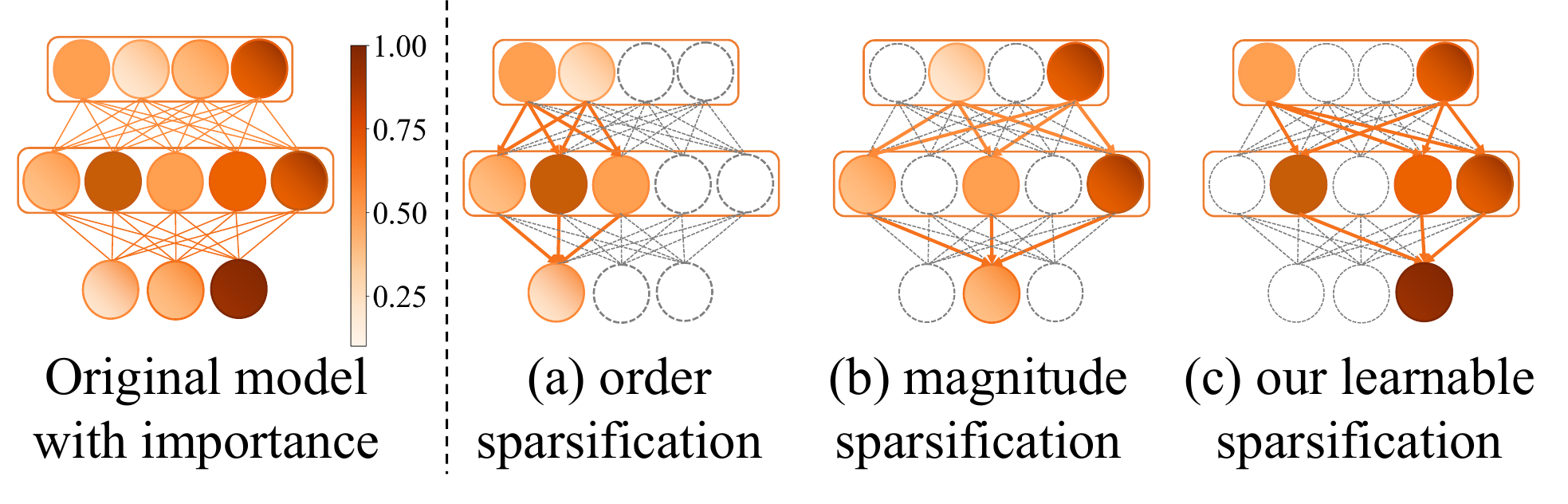}
	\caption{Different pattern strategies. The padding represents importance scores and no padding indicates the unit is sparsified.}
	\label{diff_sparsification}
	\vskip -0.15in
\end{figure}

Several pioneering studies \cite{HeteroFL_2021,FlexFL_2024} introduce sparsification into FL to assign distinct models for clients, where rigid rules of sparse ratio are artificially set based on device capabilities. However, sparse ratios are not only restricted by diverse capabilities, but also closely correlated with model accuracy \cite{FedMP_2022, FedDrop_2022}. These rigid rules cannot flexibly model the interaction between sparse ratio, capability, and accuracy, failing to trade off resource self-adaption and accuracy guarantee over non-IID data. 
To fit non-IID data, recent works \cite{Hermes_2021, FedSpa_2022, FedP3_2024} explore the personalization of sparse models. They heuristically specify sparse patterns including random \cite{FD_2018}, ordered \cite{Fjord_2021}, and magnitude-based \cite{FedSpa_2022, CS_2023} sparsify, ignoring the precise measure of model unit importance on local data. As shown in Figure \ref{diff_sparsification}, we use the depth of padding to indicate model unit importance, and these heuristic strategies prune some significant units. The resulting sparse model loses the representation abilities of the corresponding important parameters and cannot accurately extract local data features, resulting in performance degradation. Overall, artificially-given rigid ratio rules and heuristic pattern strategies cannot effectively accommodate complicated dual heterogeneity in real EDM scenarios.

In light of the above observation, we propose Learnable Personalized Sparsification for heterogeneous Federated learning (FedLPS), which enables learnable customization of sparse patterns and sparse ratios to tailor a capability-affordable and accuracy-remarkable submodel for each client. Specifically, FedLPS links the model unit importance and local data in local loss. With the importance-associated loss, the client accurately learns importance indicators based on local data by the back propagation to derive importance-based sparse patterns with minimal heuristics, achieving personalized sparse training. Furthermore, we design P-UCBV with accuracy-dominated arm elimination to learn the correlation among sparse ratio, capability, and accuracy. Based on the superposition effect of diverse capabilities and non-IID data, we adaptively determine sparse ratios for clients to flexibly accommodate all possible cases. In this way, FedLPS allows each client to customize a sparse model with the learnable pattern and adaptive ratio to accurately process personalized data and match diverse capabilities, boosting model performance and computing efficiency. 

Our main contributions can be summarized as follows:
\begin{itemize}
\item  FedLPS takes the first step to customize a data-driven and resource-adapted sparsification in learnable ways for each client so as to accelerate the training process while enhancing inference accuracy in complicated non-IID and system-heterogeneous MEC scenarios. 
\item For precise and efficient edge data processing, unit-wise importance indicators are optimized on local data to facilitate learnable pattern personalization, and P-UCBV learns additive feedback of time costs and accuracy under diverse resource and non-IID data to adaptively determine computation-efficient and accuracy-guaranteed ratios.
\item Theoretically, we provide the upper bound of the gap between global and local parameters in heterogeneous settings and prove the convergence of FedLPS under SGD optimization with the constraints of learning rates.
\item We conduct experiments on 5 classic datasets with various models. The results show that FedLPS provides 1.28\%-59.34\% accuracy gains while achieving up to 60\% reduction in computation costs and more than 68.80\% time acceleration compared to baselines.

\end{itemize}

\section{Preliminaries and Problem Formulation}

Suppose $K$ edge devices with data $\{D_1, \ldots, D_K\}$ participate in edge computing. These distributed data are essentially non-IID and devices generally hold heterogeneous resource configurations in real MEC scenarios. Let $z_k$ represent the computation capability of client $k$.

\paratitle{Personalization Target.} To fit non-IID data, the personalization paradigm learn an individual model for each client. Let $\omega_k$ denote local model parameters of client $k$. The local mask $m_k \in \{0,1\}^{|\omega_k|}$ induces sparsification by $\omega_k \odot m_k$, where $\odot$ is a Hadamard product. Our target is to optimize local sparse models to minimize the average loss of clients, expressed as
\begin{equation}
	\label{PFL_paradigm}
	\begin{split}
		& \min_{\omega_1, \ldots, \omega_K \in \Omega}{\frac{1}{K}\sum_{k=1}^{K} F_k(\omega_k \odot m^*_k )}, \\
		\text{s.t.} \quad  & F_k(\omega_k \odot m^*_k) =  \mathbb{E} [ \mathcal{L}_k(\omega_k \odot m^*_k \, | \, \omega; D_k)]
	\end{split}
\end{equation}
Here, $\Omega$ is the feasible parameter space of local models, and $\mathcal{L}_k(\cdot\, | \omega; D_k)$ denotes the regularization loss of client $k$ over $D_k$ under global parameters $\omega$. For local mask $m_k$, if an element of $m_k$ is zero, the corresponding parameters are zeroed out, otherwise remain active. In this setting, the nonzero parameters of $\omega_k \odot m_k$ can characterize the submodel. The nonzero parameters are trained and uploaded to the server, meaning that the local running burden and uplink communication volume can be reduced compared to the original dense model.
The optimal mask of client $k$ is denoted as $m^*_k$, which is hard to learn in a non-heuristic way.

\begin{figure*}
	\centering
	\includegraphics[width=1.8\columnwidth]{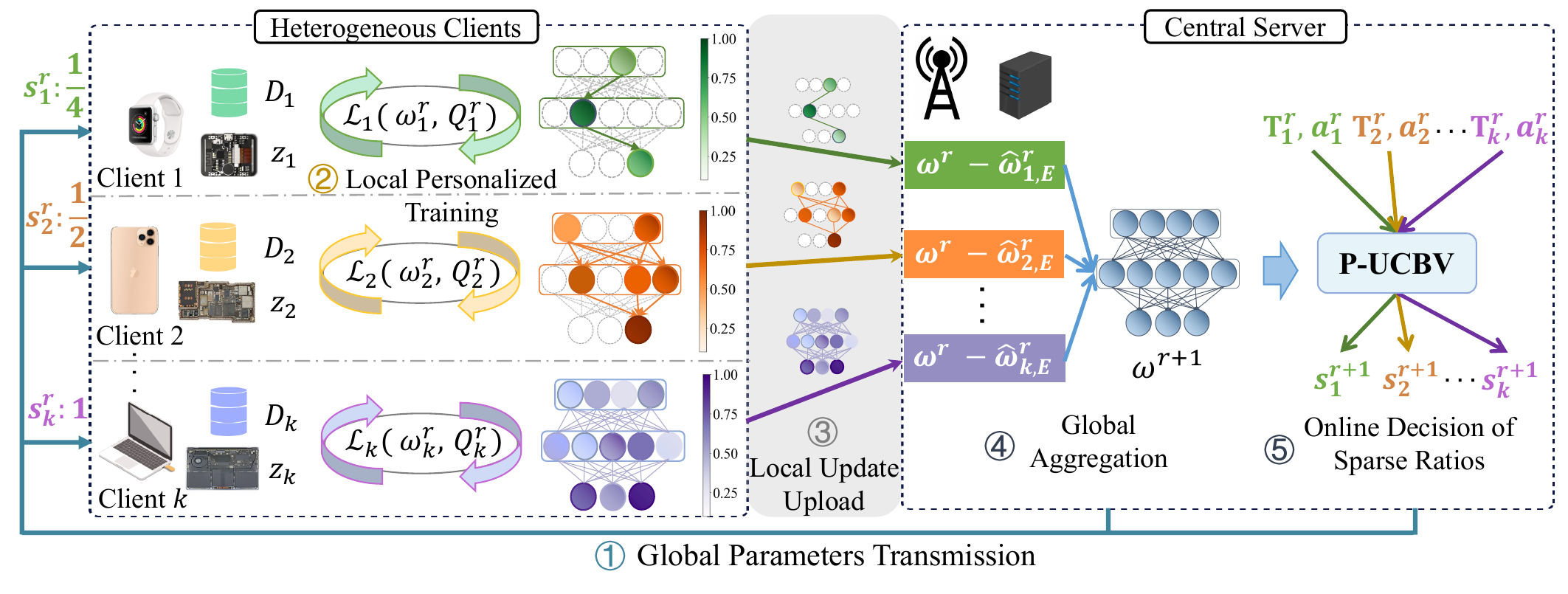}
	\caption{The overview diagram of our FedLPS framework, where the numbers \normalsize{\textcircled{\scriptsize{1}}}\normalsize \--{}\normalsize{\textcircled{\scriptsize{5}}}\normalsize\, represent the training procedures.}
	\label{overview}
	\vskip -0.15in
\end{figure*}

\paratitle{Sparse Ratio and Pattern Definition.} Let $N_k$ denote the number of nonzero elements in $m_k$ such that $N_k=\|m_k\|_0$. We define the sparse ratio $s_k$ of client $k$ as $s_k = N_k/|\omega_k|$, where a lower ratio implies a higher sparse degree. Given $s_k$, the local mask $m_k$ is determined by the sparse pattern $P_k$ that indicates the positions of the retained model units. In view of general training speed capabilities, structured sparsification is our primary focus, which considers the structurally indivisible element (\emph{e.g.}, neuron and convolution channel) in DNN as the sparse granularity. We define a network topology element at the sparse granularity level as a model unit.
With unit-wise sparse pattern $P_k$, the local mask is derived by
\begin{equation}
\label{mask_derive}
m_k = \mathcal{M}(P_k\; |\; \omega_k, s_k),
\end{equation}
where $\mathcal{M}(P|\omega, s)$ denotes the construction of a binary mask with the same dimension as $\omega$ and sparse ratio $s$ by setting 1 in the positions where $P$ is true and 0 elsewhere.  Several studies \cite{AFD_2020, FedMP_2022, FedBIAD_2023} demonstrate that sparse patterns and ratios have significant impacts on running costs and model performance. In view of non-IID data and diverse resource configurations across edge devices, it is crucial to learn personalized sparse patterns and adaptive sparse ratios, aiming at local resource self-adaption and accuracy improvement in edge computing.

\section{The FedLPS Framework}

\subsection{Overview}

To deal with complicated issues introduced by non-IID data and system heterogeneity in edge data management and computing scenarios, we propose a learnable sparse customization framework, FedLPS, as illustrated in Figure \ref{overview}. The core components of FedLPS include (1) \emph{customizing a learnable sparse pattern} by optimizing the importance scores of model units during the back-propagation of local training and (2) \emph{determining an adaptive sparse ratio} through learning the additive feedback of client-specific resource constraints and accuracy changes over local data. Specifically, the procedures of FedLPS in a communication round involve:
\begin{itemize}
\item \textbf{Local personalized training with learnable sparse patterns:} Each client maintains a unit-wise importance indicator to derive a personalized sparse pattern. The client initializes the local model and sparsifies it with the sparse ratio and pattern. With designed importance-associated loss, local model parameters and importance-based sparse patterns are updated over local data via back propagation. After local training, the client uploads sparse update, local cost, and accuracy to the server.
\item \textbf{Global aggregation and sparse ratio decision:} The server aggregates local updates and determines sparse ratios. The online decision of sparse ratios is viewed as a MAB problem. Based on local cost and accuracy statistics, P-UCBV is developed to adaptively select an appropriate sparse ratio for the client.
\end{itemize}
The above steps are cyclically iterated until round $r \geq R$. The whole process of FedLPS is summarized in Algorithm \ref{FedLPS}. 

\subsection{Learnable Sparse Training to Personalization}

\paratitle{Importance Indicator.} 
In non-IID settings, the same model unit exhibits distinct importance on different clients. We introduce an importance indicator for each client to individually measure the significance of each model unit over local data. We define $Q_{k, l}^r$ as the local importance indicator in the $l$-th iteration of round $r$ on client $k$, which can be expanded to
\begin{equation}
	Q_{k, l}^r = [q_{k,l}^{r,1}, \ldots, q_{k,l}^{r,J}]^\top \in \mathbb{R}^J.
\end{equation}
Here,  $J$ denotes the number of sparsifiable units in the local DNN. For instance, in a Fully-Connected Neural Network (FCNN), the sparsifiable units are neurons and $J$ is the number of neurons. Each element $q_{k,l}^{r,j}, \, j\in\{1,\ldots, J\}$ denotes the importance score of the $j$-th sparsifiable unit in the local model. A higher score indicates that the corresponding unit exhibits more significant representation ability for local data. The units with greater significance should be held in sparse models to effectively represent local data and improve performance.

\begin{algorithm}[htbp]
	\caption{FedLPS}
	\small
	\label{FedLPS}
	\textbf{Initialize}: client selection fraction $\epsilon$; the number of local iterations $E$; initial global parameters $\omega^0$; initial importance indicators $\{Q^s_{1}, \dots, Q^s_{K}\}$; initial sparse ratios $\{s_1^0, \ldots, s_K^0\}$;
	\begin{algorithmic}[1] 
		\SUB{ServerAction:}
		\STATE Set the number of selected clients in a round $C \leftarrow \max(\lfloor \epsilon \cdot K \rfloor ,\, 1)$
		\FOR{each round $r \in \{0,\ldots,R-1\}$}
		\STATE $\mathcal{C}_r \leftarrow$ random set of $C$ clients
		\STATE Send global parameters $\omega^{r}$ and the sparse ratio $s_k^r$ \\ to each selected client $k\in\mathcal{C}_r$
		\FOR{each client $k \in \mathcal{C}_r$ \textbf{in parallel}}
		\STATE $\hat{\omega}_{k,E}^r,\; T_k^r,\; a_k^r  \leftarrow$ \textbf{ClientUpdate}($\omega^{r}$, $s_k^r$) \\ \emph{// Local personalized training}
		\ENDFOR
		\STATE Global aggregation via Equation (\ref{aggreagtion})
		\FOR{$k\in\{1,2, \ldots, K\}$}
		\IF{$k\in\mathcal{C}_r$} 
		\STATE $s_k^{r+1} \leftarrow$ \textbf{P-UCBV}($T_k^r,\; a_k^r$)  \emph{// Online decision \\ of sparse ratios by Algorithm \ref{EUCBV}}
		\ELSE 
		\STATE $s_k^{r+1} \leftarrow s_k^r$
		\ENDIF
		\ENDFOR 
		\ENDFOR
		\SUB{}
		\SUB{ClientUpdate ($\omega^{r}$, $s_k^r$):}    \qquad\qquad \emph{// Done by client $k$}
		\STATE  Initialize $\omega^r_{k,0} \leftarrow \omega^{r}$ and 
		$Q^r_{k,0}\leftarrow Q^s_k$ 
		\FOR{each iteration $l=\{0,1,\ldots,E-1\}$}
		\STATE Sample a batch of training data $d_l^k$
		\STATE $\omega_{k,l+1}^r \leftarrow \omega_{k,l}^r  \; - \;  \textbf{SGD}\Big(\mathcal{L}_k(\omega_{k,l}^r,\; Q_{k,l}^r \; | \; \omega^r, d^k_l),$ \\ $\omega_{k,l}^r \odot \mathcal{M}\big(\gamma(Q_{k, l}^r - \tau_{k, l}^r I) | \;\omega_{k,l}^r, s_k^r \big), \eta_r \Big)$
		\STATE $Q_{k,l+1}^r = Q_{k,l}^r  -  \textbf{SGD}\Big(\mathcal{L}_k(\omega_{k,l}^r,\; Q_{k, l}^r \; | \; \omega^r, d^k_l), \; Q_{k,l}^r, \eta_r\Big)$
		\ENDFOR
		\STATE Record $Q_k^s \leftarrow Q_{k,E}^r$
		\STATE Obtain local personalized model with sparse \\ parameters $\omega_{k,E}^r \odot \mathcal{M}\big(\gamma(Q_{k, E}^r - \tau_{k, E}^r I) | \;\omega_{k,E}^r, s_k^r \big)$
		\STATE $\hat{\omega}_{k,E}^r \leftarrow (\omega^{r} - \omega_{k,E}^r) \odot \mathcal{M}\big(\gamma(Q_{k, E}^r - \tau_{k, E}^r I) | \;\omega_{k,E}^r, s_k^r \big)$
		\STATE Count local cost $T^r_k$ and average training accuracy $a^r_k$
		\STATE Return $\hat{\omega}_{k,E}^r,\; T^r_k, \; a^r_k$ to the server
	\end{algorithmic}
\end{algorithm}

\paratitle{Importance-Derived Sparse Pattern.} We can derive a binary sparse pattern $P_{k,l}^r=[\beta_{k,l}^{r,1}, \ldots, \beta_{k,l}^{r,J}]^\top \in \{0,1\}^J$ based on local importance indicator $Q_{k,l}^r$. Given the sparse ratio $s_k^r$, client $k$ calculates an importance threshold $\tau_{k,l}^r$ as $(1-s_k^r)$-quantile of $Q_{k,l}^r$ such that the sparse pattern is formulated by
\begin{equation}
	P_{k,l}^r = \gamma(Q_{k, l}^r - \tau_{k,l}^r I),
\end{equation}
where $\gamma(\cdot)$ is a step function and $I$ denotes a unit vector with the same shape as $Q_{k, l}^r$.
If $q_{k,l}^{r,j} < \tau_{k,l}^r$,  the $j$-th element in $P_{k,l}^r$ is set to $\beta_{k,l}^{r,j} = 0$ and the connections corresponding to the $j$-th model units are masked. In this way, the local mask can be derived in each local iteration $l$ by
\begin{equation}
		\label{local_mask_customize}
	 	m^r_{k,l} = \mathcal{M}(P_{k,l}^r\; | \; \omega_{k,l}^r, s^r_k) 
\end{equation}
where $\omega_{k,l}^r$ denotes local parameters in current iteration. We expect to tailor a personalized model with a learnable sparse pattern to fit local data. Hence, a personalized importance indicator should be learned to customize the sparse pattern.

\paratitle{Importance-Associated Regularization Loss.} We design an importance-associated regularization loss, which consists of three terms. The first one is the task-specific optimization function (\emph{e.g.}, cross-entropy function) between the sparse model prediction $\hat{y}$ and the data label $y$ for any $(x,y)\in D^k$:
\begin{equation}
	\label{first_term}
	\mathcal{L}_{tr}^k = \ell\Big(\hat{y},y | \, \omega_{k,l}^r \odot \mathcal{M}\big(\gamma(Q_{k,l}^r - \tau_{k,l}^r I)\; |\;\omega_{k,l}^r, s_k^r\big) \Big).
\end{equation}
By (\ref{first_term}), we establish the correlation between the unit significance $Q_{k,l}^r$ and local data $D^k$, which enables the importance indicator to be learned through local data mining. 

The second one is a local parameter regularization term
\begin{equation}
	\label{parameter_regularization}
	\mathcal{L}_{pr}^k = \big\|\omega_{k,l}^r - \omega^r \big\|^2.
\end{equation}
 It limits local updates not to deviate too much and is commonly used in prior works \cite{ditto_PFL_2021, FedTrip_2023}. The third one is an importance regularization term to prevent excessive shifts and over-sharpening of the importance indicator, formulated as:
 \begin{equation}
 	\label{importance_regularization}
 	\mathcal{L}_{ir}^k = \big\|Q_{k,l}^r - \sigma(|\omega_{k,l}^r|_{\mathrm{J}})\big\|^2.
 \end{equation}
Here, $|\omega_{k,l}^r|_{\mathrm{J}}$ is a sum vector of parameter absolute values, with $J$ dimensions. The $j$-th element of $|\omega_{k,l}^r|_{\mathrm{J}}$ is the sum of the absolute value of parameters corresponding to the $j$-th unit. $\mathcal{L}_{pr}^k$ has constrained the update of local parameters so that their magnitude sum $|\omega_{k,l}^r|_{\mathrm{J}}$ would not change sharply. $\sigma(\cdot)$ is a sigmoid function, which further smooths $|\omega_{k,l}^r|_{\mathrm{J}}$ and limits it to $[0,1)$.  Hence, (\ref{importance_regularization}) avoids $Q^r_{k,l}$ from being too biased or over-sharpened during local training. Combining the three terms, the importance-associated regularization loss is expressed as
\begin{equation}
\label{our_loss}
\mathcal{L}_k (\omega_{k,l}^r,\; Q_{k,l}^r \, | \, \omega^r, D^k)= \mathcal{L}_{tr}^k  + \mu * \mathcal{L}_{pr}^k + \lambda * \mathcal{L}_{ir}^k.
\end{equation} 
In this design, we integrate the importance indicator into loss to make unit significance learnable, which assists in updating the importance-based sparse pattern with minimal heuristics.

\paratitle{Client-side Update.} With global parameters $\omega^{r}$ and sparse ratio $s_k^r$, client $k$ measures its computation capability $z_k$ and restricts the sparse model under the carrying capacity by $s_k^r \leq z_k$. If the server-determined sparse ratio exceeds local capability, the client directly resets $s_k^r = z_k$. Consistent with common works \cite{HeteroFL_2021, Fjord_2021}, we perform layer-wise sparsification and adopt the same sparse ratio $s_k^r$ for each layer. The sparsification is induced into the local initial model via $\omega_{k,0}^r \odot \mathcal{M}\big(\gamma(Q_{k, 0}^r - \tau_{k, 0}^r I)|\;\omega_{k,l}^r, s_k^r \big)$, where global parameters $\omega^r$ are imported into the local model as $\omega_{k,0}^r$. The iterative updates of local parameters and importance-based sparse patterns (lines 18-22 in Algorithm \ref{FedLPS}) are described as follows. For each local iteration $l=\{0,\ldots, E-1\}$, client $k$ constructs the local mask $m_{k,l}^r$ with the importance indicator by (\ref{local_mask_customize}) and induces sparseness into the local model by $\omega_{k,l}^r \odot m_{k,l}^r$. In back-propagation, local sparse parameters are updated via
\begin{align}
\label{parameter_update}
\omega_{k,l+1}^r = \omega_{k,l}^r  & \; - \;  \textbf{SGD}\Big(\mathcal{L}_k(\omega_{k,l}^r,\; Q_{k,l}^r \, | \, \omega^r, d^k_l), \; \\ & \quad \omega_{k,l}^r \odot \mathcal{M}\big(\gamma(Q_{k, l}^r - \tau_{k, l}^r I) |\;\omega_{k,l}^r, s_k^r \big), \eta_r \Big), \notag
\end{align}
and the update of the unit-wise importance indicator follows
\begin{equation}
\label{importance_update}
Q_{k,l+1}^r = Q_{k,l}^r  -  \textbf{SGD}\Big(\mathcal{L}_k(\omega_{k,l}^r, Q_{k, l}^r | \omega^r, d^k_l), Q_{k,l}^r, \eta_r \Big).
\end{equation}
Here, $\textbf{SGD}(\mathcal{L}, Q, \eta)$ denotes the gradient calculation of function $\mathcal{L}$ on $Q$ with a learning rate $\eta$, and $d^k_l$ is the training data subset used in the $l$-th local iteration. The joint optimization of sparse parameters and importance indicators (that derive sparse patterns) enables the learnable sparse training on local specific data, further customizing a sparse model for each client. 

\paratitle{Parameter Upload and Global Aggregation.} After $E$ iterations, client $k$ locally stores its personalized model $\Tilde{\omega}_{k,E}^r = \omega_{k,E}^r \odot m_{k, E}^r$ with $m_{k,E}^r = \mathcal{M}(\gamma(Q_{k, E}^r - \tau_{k, E}^r I) | \omega_{k,E}^r, s_k^r)$ (as lines 23-24 in Algorithm \ref{FedLPS}). The nonzero parameters of residual gradients 
\begin{equation}
	\hat{\omega}_{k,E}^r = (\omega^{r} - \omega_{k,E}^r) \odot m_{k, E}^r
\end{equation}
are uploaded to the server. Subsequently, the server performs global aggregation (line 8 in Algorithm \ref{FedLPS}) via
\begin{equation}
\label{aggreagtion}
\omega^{r+1} = \frac{\sum_{k\in\mathcal{C}_r}{|D_k|\; (\omega^{r} - \hat{\omega}_{k,E}^r})}{\sum_{k\in\mathcal{C}_r}{|D_k|}}.
\end{equation}
Note that $\hat{\omega}_{k,E}^r$ denotes the sparse local update, while $\omega^{r}$ represents dense global parameters, which implies that $(\omega^{r} - \hat{\omega}_{k,E}^r)$ is relatively dense. Furthermore, the local sparse pattern of $\hat{\omega}_{k,E}^r$ is unique for each client $k$. 
Thus, the aggregation can provide a relatively dense update for global parameters.

\subsection{Online Decision of Sparse Ratio}

In complicated non-IID and system-heterogeneous EDM, the decision of sparse ratios has to consider two aspects. \emph{First, the sparse ratio intuitively determines the model scale and training costs.}
The lower sparse ratio contributes to a smaller submodel and fewer costs, which can match lower-tier clients.
\emph{Secondly, the sparse ratio is closely correlated with model accuracy over local data} in non-IID settings, where the lower sparse ratio is more likely to deteriorate accuracy. 
However, most existing works manually set rigid rules, failing to jointly optimize the two aspects. They either directly specify a fixed adjustment rate \cite{LotteryFL_2021, Hermes_2021} or simply control sparse ratios according to device capabilities \cite{Fjord_2021, HeteroFL_2021}, leading to training delays and performance bottlenecks.
Thus, it is essential to adaptively determine sparse ratios by learning the additive effect of heterogeneous capabilities and non-IID data for the trade-off between capability adaption and accuracy guarantee.

In the whole FL process, the sparse ratio decision for each client can be regarded as a sequential decision problem, which perfectly matches the modeling of the Multi-Armed Bandit (MAB) problem. The server is viewed as a bandit and arms are feasible sparse ratios. Within limited rounds, the server decides sparse ratios for selected clients (\emph{i.e.}, the bandit chooses arms) in each round. Different sparse ratios bring distinct local costs and training accuracy for each client, which reflect the rewards of arms in our case. We aim at obtaining a sparse ratio sequence for each client to achieve accuracy guarantee with as little training overhead as possible. Thus, we model the ratio decision for each client as a MAB problem, and the server creates $K$ agents to address the MAB problems of $K$ clients. For each agent, the arm space $[0,1)$ of sparse ratios is infinite. 
	
Although several UCB-extended methods \cite{EUCBV_2018, FedMP_2022} are explored for MAB problems, they either only work with discrete arms without involving the transform of infinite arm space or unilaterally target time saving without considering non-IID data, bringing performance sacrifice and even hindering convergence. In this work, we develop Prompt Upper Confidence Bound Variance (P-UCBV), in which a novel reward function is designed to learn the additive effect of resource restrictions and non-IID data on sparse ratios for the trade-off between resource self-adaption and accuracy improvement. Moreover, we introduce the accuracy-dominated arm elimination, where the sparse ratios that sharply deteriorate accuracy on local data are promptly removed from the feasible arm space to avoid severe accuracy fluctuations and improve decision efficiency.

\paratitle{Reward Function.} Considering the superimposed effect of distinct resource restrictions and non-IID data, both time costs and accuracy changes are involved in the reward function. First, we build a local cost formula involving computation and communication overhead. The local cost of client $k$ in round $r$ is denoted as $T^r_k$. Let Floating Point Operations (FLOPs) characterize computation overhead and $\widehat{F}^r_k$ denote FLOPs of client $k$ in round $r$. The communication cost is represented by the transmitted parameter size $\widehat{B}^r_k$. The local cost $T^r_k$ under client-side resource configurations is calculated by
\begin{equation}
\label{cost}
T^r_k = \widehat{F}^r_k / F_k^r + \alpha \widehat{B}^r_k / B_k^r,
\end{equation} 
where $F_k^r$ and $B_k^r$ are maximum capacities of locally available computation and bandwidth. The cost calculation of (\ref{cost}) is implemented on the client, meaning that privacy-sensitive configuration information (\emph{e.g.}, computing power $F_k^r$) of clients will not be leaked. The reward of sparse ratio $s_k^r$ is defined as
\begin{equation}
\label{reward}
G(s_k^r) = \big(U(a_k^r) -U(a_k^{r-1})\big) / T_k^r.
\end{equation}
The utility function $U(\cdot)$ \cite{multi-agent_RL_2022} is used to moderately transform the accuracy, which accounts for marginal accuracy gains near the end of FedLPS.

\begin{algorithm}[tb]
\caption{P-UCBV in round $r$ for client $k \in\mathcal{C}_r$}
\label{EUCBV}
\small
\textbf{Initialize}: partition set $\textbf{S}_{k,0}$;
client selection fraction $\epsilon$; differential accuracy threshold $\Delta$; $\varepsilon_0 : = 1$; $\xi=R/(K\cdot \epsilon);\; \psi=\xi / I_0^2; \;$ \\
\textbf{Input}: local cost $T_k^r$; average training accuracy $a_k^r$;\\ 
\begin{algorithmic}[1] 
	\STATE $S_k^u \gets$ the partition where $s_k^r$ resides
	\STATE Split $S_k^u$ with $s_k^r$ into $S_k^{u'}, S_k^{u''}$ for buliding $\textbf{S}_{k,r+1}$
	\IF {$a_k^r - a_k^{r-1} < \Delta$}
	\STATE Remove $S_k^{u'}$ from $\textbf{S}_{k,r+1}$  \qquad \emph{// Arm elimination}
	\ENDIF
	\STATE $\varepsilon_{r+1} \gets \varepsilon_r /2;\;\; I_{r+1} \gets |\textbf{S}_{k,r+1}|;$
	\STATE $ \psi\gets\xi / I_{r+1}^2$
	\STATE Calculate reward by (\ref{reward}), which is added into the reward lists of partitions $S_k^{u'}$ (if it exists) and $ S_k^{u''}$
	\STATE Count the average reward $\{\bar{g}_k^i \,| \, i=1, \ldots, I_{r+1}\}$ and variance $\{\bar{v}_k^i \,| \, i=1, \ldots, I_{r+1}\}$ of all partitions
	\STATE Select partition $S_k^e=\arg\max_{S_k^i\in\textbf{S}_{k,r+1}} \mathcal{U}(S_k^i)$ 
	\STATE Sample $s_k^{r+1}$ from $S_k^e$
	\STATE \textbf{return} $s_k^{r+1}$
\end{algorithmic}
\end{algorithm}

\paratitle{P-UCBV Algorithm.} The detailed process of P-UCBV is described in Algorithm \ref{EUCBV}. For each agent, the infinite ratio space is divided based on the decision tree \cite{FedMP_2022}. Initially, agent $k$ holds $I_0$ ratio partitions $\textbf{S}_{k,0}=\{S^1_{k}, \ldots, S^{I_0}_{k}\}$ with $\bigcup_{i=1}^{I_0}S^i_{k}=[0,1)$ and randomly chooses a partition to sample the initial sparse ratio $s_k^0$ from the selected partition. We evaluate the initial global model on local data to obtain the original accuracy set $\{a_k^{-1}|k=1,\ldots, K\}$. In each round $r$, the server splits the last selected partition $S_k^u$ (\emph{i.e.}, $s_k^r \in S_k^u$) into two partitions $S_k^{u'}$ and $S_k^{u''}$, where $s_k^r$ is the split point (line 2). If $s_k^r$ causes a larger sacrifice in accuracy, the accuracy-dominated prompt arm-elimination operation is activated, and the partition $S_k^{u'}$ is promptly removed to build a new partition set $\textbf{S}_{k,r+1}$ (lines 3-5). Subsequently, the reward of the selected sparse ratio $s_k^r$ is calculated by Equation (\ref{reward}) (line 8). We count the average reward of the $i$-th arm (\emph{i.e.}, $S_k^i$) as
\begin{equation}
	\bar{g}^i_k=\frac{1}{h_k^i}\sum_{\ell=1}^{h_k^i} G^i_{k,\ell},
\end{equation}
where $G^i_{k,\ell}$ is the reward feedback when $S_k^i$ is chosen for the $\ell$-th time, and $h_k^i$ is pulled times of $S_k^i$. The variance is calculated by 
	$\bar{v}^i_k = \frac{1}{h_k^i}\sum_{\ell=1}^{h_k^i} (G^i_{k,\ell} - \bar{g}^i_k)^2$.
With $\bar{g}^i_k$ and $\bar{v}^i_k$, we compute the UCBV value of each partition $S_k^i \in \textbf{S}_{k,r+1}$ by
\begin{equation}
	\mathcal{U} (S_k^i) = \bar{g}^i_k+\sqrt{\frac{\rho (\bar{v}^i_k+2) \log(\xi\psi\varepsilon_{r+1})}{4(h_k^i+1)}},
\end{equation}
where $\xi=R/(K\cdot \epsilon)$ and $\rho$ is a preset constant. $\varepsilon_{r+1}$ and $\psi$ are updated in P-UCBV as lines 6-7.
Then, the optimal partition is selected to sample a new sparse ratio (lines 10-11). 



\section{Further Analysis}

\subsection{Cost Analysis}
\paratitle{Local Computation Cost.} We analyze the local computation costs of FedLPS, where the updates of sparse parameters and importance indicators require computing support. The nonzero parameters of the sparse model are locally trained, significantly alleviating the local computation burden compared to the original dense model. On the other hand, the computation cost for the importance indicator updating is much less than that of model parameters, which can be ignored. Because the size of a unit-wise importance indicator is much smaller than the size of model parameters. Considering a model consisting of three fully-connected layers with 1024 neurons, the FLOPs of updating the local model are 15.36 $\times$ $10^5$ in an iteration, following the FLOP calculation in \cite{DisPFL_2022}. While the FLOPs of updating importance indicators are 750, which is much less than 15.36 $\times$ $10^5$ and even can be ignored. Besides, sparse ratios are decided on the server with adequate computing power, without involving local computation consumption. Hence, FedLPS effectively reduces local computation costs. 

\paratitle{Communication Cost.} For uplink communication, only nonzero parameters of local sparse models and tiny unit-wise binary patterns are uploaded, which can significantly mitigate uplink overhead. In terms of downlink communication, the server delivers relatively dense global parameters and selected sparse ratios (much less than global parameters and even can be ignored) to clients. In this way, FedLPS has a similar downlink overhead as conventional FL frameworks (\emph{e.g.}, FedAvg).

\paratitle{Global Cost.} FedLPS adopts synchronous aggregation such that the global time cost is determined by the slowest client. In round $r$, the global time cost is modeled by 
\begin{equation}
T^r = max_{k \in \mathcal{C}_r} T_k^r,
\end{equation} 
where $T_k^r$ is calculated by (\ref{cost}). 
Based on the above analysis, FedLPS can mitigate client-side computation burden (\emph{i.e.}, $\hat{F}_k^r$) and uplink communication volume (\emph{i.e.}, $\hat{B}_k^r$) such that local time costs $T_k^r, k \in \mathcal{C}_r$ are also reduced. When $T_k^r$ decreases, the corresponding global cost $T^r$ will be reduced. Our experiments have demonstrated that FedLPS reduces the total time cost to accelerate training, as shown in Section \ref{experiment}. 

\subsection{Privacy Analysis}

FedLPS transmits residual model parameters without sharing privacy-sensitive raw data, similar to conventional FL frameworks, which mitigates data privacy concerns in edge networks. For transmitting local cost and training accuracy, previous work \cite{FedMP_2022} has verified its reliability and practicality. As mentioned in \cite{FedMP_2022}, the server needs to be online aware of the different capabilities of clients in heterogeneous FL settings, while the decision-making strategy based on transmitted local time costs and training evaluation results can avoid the direct leakage of privacy-sensitive computing power information, practical in heterogeneous edge computing.

Furthermore, FedLPS is orthogonal to existing FL privacy-preserving techniques (\emph{e.g.}, differential privacy \cite{Differential_Privacy} and homomorphic encryption \cite{Homomorphic_Encryption_FL_2020}), which can be directly applied with FedLPS. For instance, we could add noise to transmitted parameters via differential privacy, and encrypt local costs and training accuracy to further guarantee privacy security.

\subsection{Convergence Analysis}

In this section, we provide a formal theoretical analysis to guarantee the convergence of FedLPS under Stochastic Gradient Descent (SGD) optimization with the constraints of learning rates. Let $\Tilde{\omega}^r_{k,l} = \omega^r_{k,l} \odot m^r_{k,l}$ with $m^r_{k,l} = \mathcal{M}\big(\gamma(Q_{k, l}^r - \tau_{k, l}^r I) | \,\omega_{k,l}^r, s_k^r \big)$ and $ \nabla_{\Tilde{\omega}} \mathcal{L}_k(\Tilde{\omega}^r_{k,l}, Q^r_{k,l}; \xi_l) = 1/ \eta_r \cdot \textbf{SGD}(\mathcal{L}_k (\omega_{k,l}^r,\, Q_{k,l}^r\,|\,\omega^r,\, d^k_l),\, \omega_{k,l}^r \odot \mathcal{M}\big(\gamma(Q_{k, l}^r - \tau_{k, l}^r I) | \omega_{k,l}^r, s_k^r \big), \eta_r )$  denotes the estimate of $\nabla_{\Tilde{\omega}}F_k(\Tilde{\omega}^r_{k,l})$, where $\xi_l$ is a random variable. Besides, we define the optimal local mask of client $k$ as $m^*_k$.

To facilitate analysis, we make the following assumptions.
\begin{assumption} [\textbf{$L$-Lipschitz Smoothness}]
	\label{smoothness}
	 For each client $k\in\{1,\ldots, K\}$, the function $F_k$ is smooth such that
	 \begin{equation*}
	 	\| \nabla_{\omega} F_k (\omega)- \nabla_{\omega} F_k (\omega')\| \leq L \|\omega- \omega' \|.
	 \end{equation*}
\end{assumption}
\begin{assumption} [\textbf{Bounded Gradient Estimator Bias}] 
	\label{estimated_bias}
	The stochastic gradient estimation of each client $k\in\{1,\ldots, K\}$ satisfies $\sigma^{2}$-bounded bias with
	\begin{equation*}
		\mathbb{E}  \big\| m^r_{k,l} \odot \nabla_{\Tilde{\omega}} \mathcal{L}_k(\Tilde{\omega}^r_{k,l}, Q^r_{k,l}; \xi_l) - m^r_{k,l} \odot \nabla_{\Tilde{\omega}} F_k(\Tilde{\omega}^r_{k,l}) \big\|^2 \leq \sigma^2
	\end{equation*}
	for all $r\in\{0,\ldots,R-1\}$ and $l\in\{0,\ldots,E-1\}$.
\end{assumption}
\begin{assumption} [\textbf{Bounded Local Sparse Gradient}] 
	\label{global_variance}
	For each client $k\in\{1,\ldots, K\}$, the expected squared norm of local sparse gradients is bounded by $H^2$, \emph{i.e.},
	\begin{equation*}
		\mathbb{E}\big\| m^r_{k,l} \odot \nabla_{\Tilde{\omega}} F_k( \Tilde{\omega}^r_{k,l}) \big\|^2 \leq H^2.
	\end{equation*}
\end{assumption}
\begin{assumption} [\textbf{Bounded Sparse Gradient Distance}]
	\label{similarity}
	For each client $k\in\{1,\ldots, K\}$, the distance between local sparse gradients with the optimal mask and average sparse gradients of all clients are bounded by a constant $B$ with
	\begin{equation*}
		\Big\| m_k^* \odot \nabla_{\Tilde{\omega}} F_k(\Tilde{\omega}_k) - \frac{1}{K} \sum_{i=1}^{K} m_i^* \odot \nabla_{\Tilde{\omega}} F_i(\Tilde{\omega}_i) \Big\| \leq B,
	\end{equation*}
	where $\Tilde{\omega}_k = m_k^* \odot \omega_k$ for any $\omega_k \in \Omega$.
\end{assumption}
Assumptions 1-3 are common in many FL convergence studies \cite{fedavg_2017, non-IID_problem_2023, PruneFL_2022}. Assumption \ref{similarity} bounds the sparse gradient difference between the local loss and global average loss, which is widely used to characterize client diversities \cite{chen2023enhancing, fedcm_2021}. 
Based on these assumptions, we analyze the convergence of FedLPS. In round $r$, a uniform learning rate $\eta_r$ is adapted for all selected clients and the local parameter update in the $l$-th iteration can be rewritten as $\omega^r_{k,l+1} = \omega^r_{k,l} - \eta_r \cdot m^r_{k,l} \odot \nabla_{\Tilde{\omega}} \mathcal{L}_k(\Tilde{\omega}^r_{k,l}, Q^r_{k,l}; \xi_l)$. Firstly, we give the upper bound of the gap between global and local parameters, as shown in Lemma \ref{Drift}.

\begin{lemma}
	\label{Drift}
	Let Assumptions 1-4 hold. For any $r \in \{0,\dots,R-1\}$, $	l \in \{0,\dots,E-1\}$, it follows
	\begin{equation*}
			\frac{1}{K} \sum_{k=1}^{K} \, \mathbb{E}  \big\| \omega^r_{k,l} -  \omega^{r+1} \big\|^2 \leq  5E\eta_r^2  \big ( \sigma^2 + 6E B^2 + 18E H^2 \big) 
	\end{equation*} 
	with the learning rate $\eta_r \leq \sqrt{\frac{1}{24ERV_rL^2}}$, where $V_r = \max_{k\in [K], l\in[E]} \Big\{ \frac{ \|m^r_{k,E} \odot m^r_{k,l} \odot \nabla_{\Tilde{\omega}} F_k(\Tilde{\omega}^r_{k,l}) - m_k^* \odot \nabla_{\Tilde{\omega}} F_k(\Tilde{\omega}_{k,r}) \|^2 } { \| \nabla_{\omega} F_k({\omega}^r_{k,l}) - \nabla_{\omega} F_k({\omega}^{r+1}) \|^2} \Big\}$ with $\Tilde{\omega}_{k,r} = m_k^* \odot \omega_{k,0}^r$ and $\Tilde{\omega}^r_{k,l} = m^r_{k,l} \odot \omega_{k,l}^r $.
\end{lemma}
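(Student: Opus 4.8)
The plan is to expand the drift $\omega^r_{k,l}-\omega^{r+1}$ through the update rules and then bound each resulting piece with one of the four assumptions. First I would use $\omega^r_{k,0}=\omega^r$ together with the per-iteration rule $\omega^r_{k,t+1}=\omega^r_{k,t}-\eta_r\,m^r_{k,t}\odot\nabla_{\Tilde{\omega}}\mathcal{L}_k(\Tilde{\omega}^r_{k,t},Q^r_{k,t};\xi_t)$ to write the local side as $\omega^r_{k,l}-\omega^r=-\eta_r\sum_{t=0}^{l-1}m^r_{k,t}\odot\nabla_{\Tilde{\omega}}\mathcal{L}_k(\cdots)$, and the aggregation rule (\ref{aggreagtion}) together with $\hat{\omega}^r_{k,E}=(\omega^r-\omega^r_{k,E})\odot m^r_{k,E}$ to write the global side $\omega^{r+1}-\omega^r$ as the data-weighted average over clients of $\eta_r\,m^r_{i,E}\odot\sum_{t=0}^{E-1}m^r_{i,t}\odot\nabla_{\Tilde{\omega}}\mathcal{L}_i(\cdots)$. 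Substituting, $\omega^r_{k,l}-\omega^{r+1}$ becomes an $\eta_r$-scaled combination of accumulated, possibly doubly-masked, stochastic gradients.

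Next I would split every stochastic gradient into its true-gradient part $m^r_{k,t}\odot\nabla_{\Tilde{\omega}}F_k(\Tilde{\omega}^r_{k,t})$ and a zero-mean noise part, and, crucially, add and subtract the optimal-mask reference gradient $m_k^*\odot\nabla_{\Tilde{\omega}}F_k(\Tilde{\omega}_{k,r})$ together with its across-client average $\tfrac1K\sum_i m_i^*\odot\nabla_{\Tilde{\omega}}F_i(\Tilde{\omega}_i)$. Then I would apply $\|\sum_{j=1}^n a_j\|^2\le n\sum_j\|a_j\|^2$ (and the same inequality over the $E$-iteration sums) to break $\mathbb{E}\|\omega^r_{k,l}-\omega^{r+1}\|^2$ into five groups, which is where the leading factor $5E$ originates. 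The noise groups are controlled by Assumption \ref{estimated_bias} ($\le\sigma^2$), the reference-average-distance group by Assumption \ref{similarity} ($\le B^2$, yielding the $6EB^2$ term after the Cauchy--Schwarz factors), and the raw gradient-magnitude group by Assumption \ref{global_variance} ($\le H^2$, yielding $18EH^2$); averaging over $k$ preserves these bounds.

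The main obstacle, and the step that forces the learning-rate constraint, is the group measuring the mismatch between the doubly-masked true gradient actually used in aggregation and the optimal-mask reference, i.e. terms of the form $\|m^r_{k,E}\odot m^r_{k,l}\odot\nabla_{\Tilde{\omega}}F_k(\Tilde{\omega}^r_{k,l})-m_k^*\odot\nabla_{\Tilde{\omega}}F_k(\Tilde{\omega}_{k,r})\|^2$. Because the iteration masks $m^r_{k,t}$ drift and differ from $m_k^*$, this term cannot be bounded by a fixed constant; instead I would invoke the definition of $V_r$ as the maximal ratio of exactly this quantity to $\|\nabla_{\omega}F_k(\omega^r_{k,l})-\nabla_{\omega}F_k(\omega^{r+1})\|^2$, giving the bound $V_r\|\nabla_{\omega}F_k(\omega^r_{k,l})-\nabla_{\omega}F_k(\omega^{r+1})\|^2$, and then apply Assumption \ref{smoothness} to obtain $V_rL^2\|\omega^r_{k,l}-\omega^{r+1}\|^2$. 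This reproduces the left-hand side inside the bound, producing a self-referential inequality of the shape $X\le 5E\eta_r^2(\sigma^2+6EB^2+18EH^2)+cE^2\eta_r^2 V_rL^2 X$.

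Finally I would close the recursion: the stated constraint $\eta_r\le\sqrt{1/(24ERV_rL^2)}$ is chosen precisely so that the self-referential coefficient $cE^2\eta_r^2V_rL^2$ stays at a constant strictly below $1$, so the $X$ term can be moved to the left and absorbed, leaving the clean bound after renaming constants. I expect the bulk of the routine work to be constant bookkeeping, namely tracking how many additive terms each Cauchy--Schwarz step produces so the coefficients collapse to exactly $5$, $6E$, and $18E$, while the genuinely delicate point is the mask-heterogeneity handling via $V_r$ and smoothness, since that is what converts an otherwise unbounded mismatch term into an absorbable multiple of the target quantity.
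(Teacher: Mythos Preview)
Your approach is sound but differs structurally from the paper's. The paper does \emph{not} fully expand $\omega^r_{k,l}-\omega^{r+1}$ into all accumulated gradients; instead it peels off a \emph{single} step, writing $\omega^r_{k,l}-\omega^{r+1}=(\omega^r_{k,l-1}-\omega^{r+1})-\eta_r\, m^r_{k,l-1}\odot q^r_{k,l-1}$, applies Young's inequality with parameter $\tfrac{1}{2E-1}$ to separate the two pieces, and then telescopes only that one gradient into four parts $A_1,\dots,A_4$ (noise, mask-mismatch, client-heterogeneity, average). The $V_r$-based self-referential contribution from $A_2$ is absorbed into the recursion coefficient, turning $(1+\tfrac{1}{2E-1})+12EV_rL^2\eta_r^2$ into $(1+\tfrac{1}{E-1})$ under the learning-rate constraint; a geometric unroll over $l$ then produces the $5E$ prefactor.

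Your direct expansion-plus-absorption is a legitimate alternative, but be careful on one point you glossed over: after full expansion the right-hand side carries $\|\omega^r_{k,t}-\omega^{r+1}\|^2$ for \emph{every} $t\le l-1$ on the local side (and every $t\le E-1$ on the aggregation side), not just $t=l$, so the inequality is not literally self-referential in a single $X_l$. You must pass to $X=\max_{l}\tfrac{1}{K}\sum_k\mathbb{E}\|\omega^r_{k,l}-\omega^{r+1}\|^2$ before absorbing, and then verify that the resulting coefficient $cE^2\eta_r^2V_rL^2\le cE/(24R)$ stays strictly below $1$ with room to spare so that division leaves exactly $5$---which, like the paper, needs $R\ge E-1$. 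The paper's one-step recursion sidesteps this max-over-$l$ bookkeeping at the cost of tracking the Young-inequality parameter and the geometric sum; your route is more direct but requires more care in the absorption constant.
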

\begin{proof} To prove the above result, we define $q^r_{k,l} (\Tilde{\omega}^r_{k,l}) = \nabla_{\Tilde{\omega}} \mathcal{L}_k(\Tilde{\omega}^r_{k,l}, Q^r_{k,l}; \xi_l)$ and specify that
 \begin{itemize}
 	\item $A_1 = m^r_{k,l-1} \odot \ q_{k,l-1}^r(\Tilde{\omega}^r_{k,l-1})  - m^r_{k,l-1} \odot \nabla_{\Tilde{\omega}} F_k(\Tilde{\omega}^r_{k,l-1})$
 	\item $A_2 = m^r_{k,l-1} \odot \nabla_{\Tilde{\omega}} F_k(\Tilde{\omega}^r_{k,l-1}) - m_k^* \odot \nabla_{\Tilde{\omega}} F_k(\Tilde{\omega}_{k,r})$
 	\item $A_3 = m_k^* \odot \nabla_{\Tilde{\omega}} F_k(\Tilde{\omega}_{k,r})  - \frac{1}{K} \sum_{i=1}^{K} m_i^* \odot \nabla_{\Tilde{\omega}} F_i( \Tilde{\omega}_{i,r})$
 	\item  $A_4 = \frac{1}{K} \sum_{k=1}^{K} m_k^* \odot \nabla_{\Tilde{\omega}} F_k( \Tilde{\omega}_{k,r})$
 \end{itemize}
In this way, there is
	\begin{align}
		& \mathbb{E} \| \omega^r_{k,l} -  \omega^{r+1} \|^2 \\
		& = \mathbb{E} \big\| \omega^r_{k,l-1} - \omega^{r+1} - \eta_r m^r_{k,l-1} \odot  q^r_{k,l-1} (\Tilde{\omega}^r_{k,l-1}) \big\|^2 \notag \\
		& \leq \Big(1 + \frac{1}{2E-1}\Big) \mathbb{E} \big\| \omega^r_{k,l-1} - \omega^{r+1}\big\|^2 +  \eta_r^2 \, \mathbb{E} \big\| A_1 \big\|^2 \notag \\
		&\quad + 6E\eta_r^2 \, \mathbb{E} \big\| A_2 \big\|^2 + 6E\eta_r^2 \, \mathbb{E} \big\| A_3 \big\|^2  + 6E\eta_r^2 \, \mathbb{E} \big\| A_4 \big\|^2. \notag
	\end{align}
\textbf{Bounding $\bm{A_1}$} by Assumption  \ref{estimated_bias}, it holds
\begin{equation*}
	\eta_r^2 \, \mathbb{E} \|  m^r_{k,l-1} \odot \big(q_{k,l-1}^r(\Tilde{\omega}^r_{k,l-1})  -  \nabla_{\Tilde{\omega}} F_k(\Tilde{\omega}^r_{k,l-1}) \big) \|^2 \leq \eta_r^2 \sigma^2.
\end{equation*}
\textbf{In terms of $\bm{A_2}$}, we observe that
\begin{align}
	& 6E \eta_r^2 \mathbb{E} \big\| m^r_{k,l-1} \odot \nabla_{\Tilde{\omega}} F_k(\Tilde{\omega}^r_{k,l-1}) - m_k^* \odot \nabla_{\Tilde{\omega}} F_k(\Tilde{\omega}_{k,r}) \big\|^2 \notag \\
	& \leq 12EV_r\eta_r^2\mathbb{E} \| \nabla_{\omega} F_k({\omega}^r_{k,l-1}) - \nabla_{\omega} F_k({\omega}^{r+1}) \big\|^2 + 12EH^2\eta_r^2 \notag \\
	& \leq 12EV_rL^2\eta_r^2 \mathbb{E} \| {\omega}^r_{k,l-1} - \omega^{r+1} \|^2 + 12EH^2\eta_r^2,
\end{align}
where the last inequality holds by Assumption \ref{smoothness}. \\
\textbf{Bounding $\bm{A_3}$} with Assumption \ref{similarity}, there exists 
\begin{align}
	6E\eta_r^2 & \, \mathbb{E} \Big\| m_k^* \odot \nabla_{\Tilde{\omega}} F_k(\Tilde{\omega}_{k,r})  - \frac{1}{K} \sum_{i=1}^{K} m_i^* \odot \nabla_{\Tilde{\omega}} F_i( \Tilde{\omega}_{i,r}) \Big\|^2  \notag \\ 
	& \leq 6E\eta_r^2 B^2.
\end{align}
\textbf{For $\bm{A_4}$}, due to Assumption \ref{global_variance} and Lemma 1 in \cite{FedSpa_2022}, it satisfies 
\begin{equation}
	6E\eta_r^2\, \mathbb{E} \Big\| \frac{1}{K} \sum_{k=1}^{K} m_k^* \odot \nabla_{\Tilde{\omega}} F_k( \Tilde{\omega}_{k,r}) \Big\|^2 \leq 6E\eta_r^2 H^2.
\end{equation} \textbf{Combining the four terms} together, we obtain the following:
\begin{align}
	& \mathbb{E} \| \omega^r_{k,l} -  \omega^{r+1} \|^2 \\
	& \leq \Big(1 + \frac{1}{2E-1}\Big) \, \mathbb{E} \| \omega^r_{k,l-1} - \omega^{r+1}\|^2 + \eta_r^2 \sigma^2 \notag \\
	&  \quad + 6E\eta_r^2(B^2+ 3H^2)  + 12EV_rL^2\eta_r^2\, \mathbb{E} \| {\omega}^r_{k,l-1} - \omega^{r+1} \|^2. \notag
\end{align}
Considering \textbf{the average of all clients}, it follows
\begin{align}
	\frac{1}{K} & \sum_{k=1}^{K} \, \mathbb{E} \| \omega^r_{k,l} -  \omega^{r+1} \|^2 \notag \\
	& \leq \Big(1 + \frac{1}{E-1}\Big)\frac{1}{K} \sum_{k=1}^{K} \mathbb{E} \big\| \omega^r_{k,l-1} - \omega^{r+1} \big\|^2 \notag \\
	& \quad + \eta_r^2 \sigma^2 + 6E\eta_r^2(B^2+ 3H^2),
\end{align}
where the first inequality holds because $\eta_r \leq \sqrt{\frac{1}{24ERV_rL^2}}$ such that $\frac{1}{2E-1} + 12EV_rL^2\eta_r^2 \leq \frac{1}{2E-1} + \frac{1}{2R} \leq \frac{1}{E-1}$ with $R \geq E-1$ and $E>1$. \textbf{Expanding the recursion}, we get
\begin{equation*}
	\frac{1}{K} \sum_{k=1}^{K} \, \mathbb{E} \| \omega^r_{k,l} -  \omega^{r+1} \|^2 \leq 5E\eta_r^2 (\sigma^2 +6EB^2+ 18EH^2),
\end{equation*}
concluding the proof of Lemma \ref{Drift}.
\end{proof}

Based on Lemma \ref{Drift}, we can provide the convergence result of FedLPS in Theorem \ref{convergence}.

\begin{theorem}
	 \label{convergence}
	Let Assumptions 1-4 hold. Choose $\phi = 4\sqrt{6}L\max_{r\in[R]}\sqrt{V_r}$, $\varphi = \max_{r\in[R]} \sqrt{\frac{E}{6V_r}}$, and the learning rate $\eta_r \leq \sqrt{\frac{1}{24ERV_rL^2}}$. Then for FedLPS, it follows
	\begin{align}
		\frac{1}{R} & \sum_{r=0}^{R-1} \mathbb{E} \Big\| \frac{1}{K}\sum_{k=1}^{K} \nabla_{\Tilde{\omega}} F_k(\Tilde{\omega}_{k,r})\Big\|^2 \notag \\
		& \leq \frac{\phi}{\sqrt{ER}}\big(f_0 - f^*\big) + \frac{\varphi}{\sqrt{R}}\Big (2H^2+\frac{\sigma^2}{KE} \Big) \notag \\
		& \quad + \frac{1}{R}\Big(\frac{5}{24}+\frac{5\varphi}{12\sqrt{R}}\Big)(\sigma^2+6EB^2+18EH^2),
	\end{align}
	where $f_0 = \frac{1}{K}\sum_{k=1}^{K}F_k(\Tilde{\omega}_{k,0})$ and $f^* = \frac{1}{K}\sum_{k=1}^{K}F_k(\Tilde{\omega}_k^*)$ with the optimal local sparse parameters $\Tilde{\omega}_k^*$.
\end{theorem}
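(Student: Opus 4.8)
The plan is to run a one-round descent argument on the surrogate objective $f_r = \frac{1}{K}\sum_{k=1}^{K} F_k(\Tilde{\omega}_{k,r})$ with $\Tilde{\omega}_{k,r} = m_k^*\odot\omega^r$, and then telescope over $r=0,\dots,R-1$. First I would invoke the $L$-smoothness of each $F_k$ (Assumption \ref{smoothness}) on the masked iterates $\Tilde{\omega}_{k,r+1}=m_k^*\odot\omega^{r+1}$ and $\Tilde{\omega}_{k,r}$, and use that a binary Hadamard mask is idempotent and self-adjoint to move $m_k^*$ onto the gradient, yielding
$$f_{r+1}-f_r \le \Big\langle \tfrac1K\sum_k \nabla_{\Tilde{\omega}}F_k(\Tilde{\omega}_{k,r}),\; \omega^{r+1}-\omega^r\Big\rangle + \frac{L}{2K}\sum_k \big\|m_k^*\odot(\omega^{r+1}-\omega^r)\big\|^2,$$
where I have also used that $\Tilde{\omega}_{k,r}$ is supported on $m_k^*$, so $m_k^*\odot\nabla_{\Tilde{\omega}}F_k(\Tilde{\omega}_{k,r})=\nabla_{\Tilde{\omega}}F_k(\Tilde{\omega}_{k,r})$, which is precisely the quantity on the left of the theorem. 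I would then substitute the aggregation rule (\ref{aggreagtion}), writing $\omega^{r+1}-\omega^r$ as the weighted average over clients of the accumulated local steps obtained by telescoping the per-iteration update; because the uploaded residual is projected by the final mask $m^r_{k,E}$, the contribution of local iteration $l$ carries the compound mask $m^r_{k,E}\odot m^r_{k,l}$, exactly the support appearing in the definition of $V_r$.

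Next I would take conditional expectations to replace each stochastic gradient $\nabla_{\Tilde{\omega}}\mathcal{L}_k(\Tilde{\omega}^r_{k,l},Q^r_{k,l};\xi_l)$ by its population counterpart, paying the $\sigma^2$ price through Assumption \ref{estimated_bias}, and apply the three-point identity $\langle a,b\rangle=\tfrac12(\|a\|^2+\|b\|^2-\|a-b\|^2)$ to the inner-product term. This isolates the target stationarity quantity $\big\|\tfrac1K\sum_k \nabla_{\Tilde{\omega}}F_k(\Tilde{\omega}_{k,r})\big\|^2$ as the leading negative term, with the residual $\|a-b\|^2$ measuring how far the along-trajectory direction $\tfrac1K\sum_{k,l} m^r_{k,E}\odot m^r_{k,l}\odot\nabla_{\Tilde{\omega}}F_k(\Tilde{\omega}^r_{k,l})$ deviates from the reference $\tfrac1K\sum_k \nabla_{\Tilde{\omega}}F_k(\Tilde{\omega}_{k,r})$. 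The quadratic step-size term $\frac{L}{2}\|\omega^{r+1}-\omega^r\|^2$ is handled separately by Cauchy--Schwarz over the $E$ local steps together with Assumptions \ref{estimated_bias} and \ref{global_variance}, producing the $H^2$ and $\sigma^2/(KE)$ contributions that later form the $\varphi/\sqrt{R}$ term.

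The heart of the argument, and the step I expect to be the main obstacle, is controlling the per-client drift $\big\|m^r_{k,E}\odot m^r_{k,l}\odot\nabla_{\Tilde{\omega}}F_k(\Tilde{\omega}^r_{k,l}) - m_k^*\odot\nabla_{\Tilde{\omega}}F_k(\Tilde{\omega}_{k,r})\big\|^2$, a difference of gradients living on different sparse supports that cannot be bounded by smoothness directly. This is exactly the role of $V_r$: by its definition this masked-gradient gap is at most $V_r$ times the dense-gradient gap $\|\nabla_\omega F_k(\omega^r_{k,l})-\nabla_\omega F_k(\omega^{r+1})\|^2$, which $L$-smoothness bounds by $L^2\|\omega^r_{k,l}-\omega^{r+1}\|^2$, and Lemma \ref{Drift} then bounds this last quantity in expectation by $5E\eta_r^2(\sigma^2+6EB^2+18EH^2)$ (this is where Assumption \ref{similarity} enters, through $B$). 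The learning-rate constraint $\eta_r\le\sqrt{1/(24ERV_rL^2)}$ is tuned precisely so that the coefficient of this drift term stays dominated by the negative stationarity term, keeping the recursion from diverging.

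Finally I would rearrange the one-round inequality into $c\,\eta_r E\,\mathbb{E}\big\|\tfrac1K\sum_k \nabla_{\Tilde{\omega}}F_k(\Tilde{\omega}_{k,r})\big\|^2 \le f_r-\mathbb{E}[f_{r+1}] + \mathrm{err}_r$, sum over $r=0,\dots,R-1$ so the objective values telescope to at most $f_0-f^*$ (using $f_R\ge f^*$), divide by $R$, and substitute $\eta_r=\Theta(1/\sqrt{ER})$ from the step-size rule. The $(f_0-f^*)$ coefficient then collapses to $\phi/\sqrt{ER}$, the quadratic step-size contributions to $\varphi/\sqrt{R}$, and the telescoped drift bound to the final $\frac1R(\frac{5}{24}+\frac{5\varphi}{12\sqrt{R}})(\sigma^2+6EB^2+18EH^2)$ term; verifying the exact constants $\phi=4\sqrt6\,L\max_r\sqrt{V_r}$ and $\varphi=\max_r\sqrt{E/(6V_r)}$ is then a matter of careful bookkeeping.
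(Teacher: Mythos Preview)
Your proposal is correct and follows essentially the same route as the paper: apply smoothness to the surrogate $f(\omega^r)=\frac{1}{K}\sum_k F_k(m_k^*\odot\omega^r)$ to get a descent inequality, bound the inner-product term via the polarization identity (isolating the target stationarity quantity and a drift residual), bound the quadratic step-size term using Assumptions~\ref{estimated_bias} and~\ref{global_variance}, convert the masked-gradient drift into a dense-parameter drift via the definition of $V_r$ plus smoothness, and then invoke Lemma~\ref{Drift} before telescoping and substituting the step-size bound. The paper's proof is organized into exactly these pieces (its equations (\ref{T1})--(\ref{before_remove})), so your outline matches both the structure and the key ideas.
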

\begin{proof}
	To verify the above theorem, we define $f (\omega^r)= \frac{1}{K}\sum_{k=1}^{K}F_k(m_k^* \odot \omega^r) =  \frac{1}{K}\sum_{k=1}^{K}F_k(\Tilde{\omega}_{k,r}) $ since $\omega^r_{k,0} = \omega^r $
	and $\nabla f(\omega^r) = \frac{1}{K}\sum_{k=1}^{K} \nabla_{\Tilde{\omega}} F_k(\Tilde{\omega}_{k,r}) $. 
	By Assumption \ref{smoothness} and Lemma 5 in \cite{FedSpa_2022}, there is 
	\begin{equation}
		\small
		\label{total}
		f(\omega^{r+1}) \leq f(\omega^r) - \langle \nabla f(\omega^r), \omega^{r+1} - \omega^r \rangle + \frac{L}{2} \| \omega^{r+1}-\omega^r \|^2.
	\end{equation}
	 \textbf{For the second term} on the right of (\ref{total}), we derive that
\begin{align}
	\label{T1}
		- & \mathbb{E} \big[ \langle \nabla f(\omega^r), \omega^{r+1} - \omega^r \rangle \big] \\ 
		& \leq \frac{\eta_r}{2E} \Big( \mathbb{E} \Big\|  \frac{1}{K} \sum_{k=1}^K \sum_{l=0}^{E-1}  \big [ m^r_{k,E} \odot m^r_{k,l} \odot \nabla_{\Tilde{\omega}}F_k (\Tilde{\omega}^r_{k,l}) \notag \\ 
		&\qquad \qquad \quad - E \nabla f(\omega^r) \big ] \Big\|^2 - E^2\mathbb{E} \| \nabla f(\omega^r) \|^2 \Big) \notag \\ 
		& \leq \frac{\eta_rV_rL^2}{2K} \sum_{k=1}^K \sum_{l=0}^{E-1} \mathbb{E} \big\| \omega_{k,l}^r - \omega^{r+1} \big\|^2 - \frac{\eta_rE}{2}\mathbb{E} \| \nabla f(\omega^r) \|^2. \notag
\end{align}
	 \textbf{For the third term} on the right of (\ref{total}), it holds:
	\begin{align}
		\label{T2}
		\frac{L}{2} & \, \mathbb{E} \| \omega^{r+1}-\omega^r \|^2 \notag \\ 
		& \leq L\eta_r^2\mathbb{E} \Big\| \frac{1}{K} \sum_{k=1}^K m^r_{k,E} \odot \sum_{l=0}^{E-1} m^r_{k,l} \odot \nabla_{\Tilde{\omega}}F_k (\Tilde{\omega}^r_{k,l}) \Big\|^2 \notag \\
		& \qquad+ \frac{LE\eta_r^2\sigma^2}{K} \notag \\
		& \leq \frac{2LEV_r\eta_r^2}{K} \sum_{l=0}^{E-1} \sum_{k=1}^K \| \nabla_{\omega} F_k (\omega^r_{k,l}) -  \nabla_{\omega} F_k (\omega^{r+1})\|^2 \notag \\
		& \quad + 2 LE^2\eta_r^2\,\mathbb{E} \Big\| \frac{1}{K} \sum_{k=1}^{K} m_k^* \odot \nabla_{\Tilde{\omega}} F_k( \Tilde{\omega}_{k,r}) \Big\|^2  + \frac{LE\eta_r^2\sigma^2}{K} \notag \\
		& \leq \frac{2L^3EV_r\eta_r^2}{K} \sum_{l=0}^{E-1} \sum_{k=1}^K \mathbb{E} \|\omega^r_{k,l} - \omega^{r+1}\|^2 + 2LE^2H^2\eta_r^2 \notag \\
		& \qquad + \frac{LE\eta_r^2\sigma^2}{K}.
	\end{align}
	\textbf{Combining (\ref{T1}) and (\ref{T2})}, we can obtain the following:
	\begin{align}
		\label{before_remove}
		\mathbb{E} [f(\omega^{r+1})] \leq & \mathbb{E} [f(\omega^{r})] -  \frac{\eta_rE}{2}\mathbb{E} \| \nabla f(\omega^r) \|^2 \notag \\
		& +  \frac{5E^2L^2V_r\eta_r^3}{2} (\sigma^2+6EB^2+18EH^2) \notag \\
		& + 10E^3L^3V_r\eta_r^4 (\sigma^2+6EB^2+18EH^2) \notag \\ 
		& + 2E^2LH^2\eta_r^2+ \frac{LE\eta_r^2\sigma^2}{K}.
	\end{align}
	Furthermore, we \textbf{transform inequality (\ref{before_remove})} into
	\begin{align}
		& \mathbb{E} \| \nabla f(\omega^{r})\|^2 \notag \\ 
		& \leq  \frac{2\{\mathbb{E} [f(\omega^{r})] - \mathbb{E} [f(\omega^{r+1})]\}}{\eta_rE}  + 2EL\eta_r\Big (2H^2+\frac{\sigma^2}{KE} \Big) \notag \\
		& \quad + 5EL^2V_r\eta_r^2(1+4EL\eta_r)(\sigma^2+6EB^2+18EH^2).
	\end{align}
	By $\eta_r \leq \sqrt{\frac{1}{24ERV_rL^2}}$, there is
		\begin{align}
		\mathbb{E} \| \nabla f(\omega^{r})\|^2 \leq &  \frac{4\sqrt{6RV_r}L \{ \mathbb{E} [f(\omega^{r})] - \mathbb{E} [f(\omega^{r+1})]\}}{\sqrt{E}} \notag \\
		& + \frac{5}{24R}(1+\frac{2\sqrt{E}}{\sqrt{6RV_r}})(\sigma^2+6EB^2+18EH^2) \notag \\
		& + \frac{\sqrt{E}}{\sqrt{6RV_r}}\Big (2H^2+\frac{\sigma^2}{KE} \Big).
	\end{align}
	Let $\phi = 4\sqrt{6}L\max_{r\in[R]}\sqrt{V_r}$ and $\varphi = \max_{r\in[R]} \sqrt{\frac{E}{6V_r}}$. By $\nabla f(\omega^r) = \frac{1}{K}\sum_{k=1}^{K} \nabla_{\Tilde{\omega}} F_k(\Tilde{\omega}_{k,r}) $, it holds:
	\begin{align}
		\frac{1}{R} & \sum_{r=0}^{R-1} \mathbb{E} \Big\| \frac{1}{K}\sum_{k=1}^{K} \nabla_{\Tilde{\omega}} F_k(\Tilde{\omega}_{k,r})\Big\|^2 \notag \\
		& \leq \frac{\phi}{\sqrt{ER}}\big(f_0 - f^*\big) + \frac{\varphi}{\sqrt{R}}\Big (2H^2+\frac{\sigma^2}{KE} \Big) \notag \\
		& \quad + \frac{5}{24R}\Big(1+\frac{2\varphi}{\sqrt{R}}\Big)(\sigma^2+6EB^2+18EH^2).
	\end{align}
Thus, Theorem \ref{convergence} can be proven.
\end{proof}

\section{Experiment Evaluation}
\label{experiment}
We conduct extensive experiments on the classic datasets and models for image classification and next-word prediction tasks, aiming at answering the following questions: 
\begin{itemize}
	\setlength{\itemsep}{0pt}
	\setlength{\parsep}{0pt}
	\setlength{\parskip}{0pt}
	\item \textbf{Q1:} Can FedLPS offer higher accuracy with fewer costs compared to baselines in dual-heterogeneous settings?
	\item \textbf{Q2:} Does FedLPS perform better in convergence?
	\item \textbf{Q3:} How do different non-IID and system-heterogeneous levels affect the performance of FedLPS?
	\item \textbf{Q4:} How do the learnable patterns and adaptive ratios obtained by P-UCBV affect performance, respectively? 
\end{itemize}

\subsection{Experimental Settings}
\paratitle{Datasets.} We adopt 5 benchmarks to evaluate the performance of FedLPS: MNIST \cite{MNIST}, CIFAR10, CIFAR100 \cite{CIFAR10}, Tiny-Imagenet \cite{tinyimagenet}, and Reddit \cite{leaf_benchmark_2018}. Reddit is a realistic federated dataset for the next-word prediction task, which contains a large number of English comment texts from real users \cite{leaf_benchmark_2018}. We adopt the top 100 users with more data as clients, where different clients have different sample sizes. Considering different speaking preferences, Reddit is inherently non-IID. The other datasets are widely used for image classification. MNIST and CIFAR10 involve 10-class images. For CIFAR100, there are 100-classe images. Tiny-Imagenet contains 100k color images from 200 classes. We utilize the pathological partition strategy \cite{DisPFL_2022} to obtain highly non-IID data, where each client is randomly assigned 2 classes for MNIST and CIFAR10, 10 classes for CIFAR100, and 20 classes for Tiny-Imagenet.

\begin{table*}[]
	\centering
	\renewcommand{\arraystretch}{1}
	\caption{Accuracy and FLOPs results of different methods.
	}
	\setlength{\tabcolsep}{2.6mm}{
		\begin{tabular}{l|cc|cc|cc|cc|cc}
			\toprule
			\multirow{4}{*}{Methods}& \multicolumn{2}{c|}{{MNIST}}  &  \multicolumn{2}{c|}{{CIFAR10}} & \multicolumn{2}{c|}{{CIFAR100}} & \multicolumn{2}{c|}{{Tiny-Imagenet}} & \multicolumn{2}{c}{{Reddit}} \\
			\cmidrule{2-11}
			& Acc & FLOPs & Acc & FLOPs & Acc & FLOPs & Acc & FLOPs & Acc & FLOPs \\
			& (\%) & (1e12) & (\%) & (1e12) & (\%) & (1e12) & (\%) & (1e12) & (\%) & (1e12) \\
			\midrule
			FedAvg & 88.15\scalebox{0.9}{$\pm$1.14} & 2.8 &  31.28\scalebox{0.9}{$\pm$1.24} & 825.4 &
			24.30\scalebox{0.9}{$\pm$0.53} & 4934.3 &
			5.21\scalebox{0.9}{$\pm$0.14} & 27081.0 & 23.39\scalebox{0.9}{$\pm$0.56} & 10.5 \\
			FedProx & 87.53\scalebox{0.9}{$\pm$1.16} & 2.8 & 31.12\scalebox{0.9}{$\pm$1.38} & 825.4 & 24.83\scalebox{0.9}{$\pm$0.52} & 4934.3 & 5.66\scalebox{0.9}{$\pm$0.12} & 27081.0 & 23.37\scalebox{0.9}{$\pm$0.57} & 10.5 \\
			Oort & 95.21\scalebox{0.9}{$\pm$0.42} & 2.8 & 33.62\scalebox{0.9}{$\pm$0.97} & 825.4 & 26.56\scalebox{0.9}{$\pm$0.39} & 4934.3 & 4.93\scalebox{0.9}{$\pm$0.12} & 26539.4 &
			23.91\scalebox{0.9}{$\pm$0.06} & 10.5 \\
			REFL & 94.67\scalebox{0.9}{$\pm$0.33} & 1.3 & 33.10\scalebox{0.9}{$\pm$1.68} & 642.1 & 24.87\scalebox{0.9}{$\pm$0.64} & 2950.7 & 5.03\scalebox{0.9}{$\pm$0.10} & 15436.2 &
			24.45\scalebox{0.9}{$\pm$0.07} & 8.8 \\
			PruneFL & 90.90\scalebox{0.9}{$\pm$0.34} & 2.2 & 32.20\scalebox{0.9}{$\pm$1.66} & 793.3 & 26.89\scalebox{0.9}{$\pm$0.46} & 4673.8  &	5.03\scalebox{0.9}{$\pm$0.13} & 26076.9 & 23.60\scalebox{0.9}{$\pm$0.19} & 9.0 \\
			CS & 92.44\scalebox{0.9}{$\pm$1.16} & 1.4 & 28.09\scalebox{0.9}{$\pm$0.84} & 417.0 & 27.76\scalebox{0.9}{$\pm$0.61} & 2493.7 & 6.02\scalebox{0.9}{$\pm$0.10} & 13686.2 & 21.69\scalebox{0.9}{$\pm$0.24} & 5.3 \\
			\midrule
			eFD & 93.31\scalebox{0.9}{$\pm$0.91} & 1.1 & 28.44\scalebox{0.9}{$\pm$0.98} & 380.1 & 22.74\scalebox{0.9}{$\pm$0.45} & 1989.5 &	3.85\scalebox{0.9}{$\pm$0.11} & 10337.5 & 23.20\scalebox{0.9}{$\pm$0.04} & 4.6 \\
			Fjord & 92.56\scalebox{0.9}{$\pm$1.03} & 1.1 & 28.04\scalebox{0.9}{$\pm$1.27} & 380.1 & 25.83\scalebox{0.9}{$\pm$0.77} & 1989.5 & 3.77\scalebox{0.9}{$\pm$0.08} & 10337.5 & 23.09\scalebox{0.9}{$\pm$0.56} & 4.6 \\
			HeteroFL & 90.76\scalebox{0.9}{$\pm$1.05} & 1.1 & 33.22\scalebox{0.9}{$\pm$1.73} & 380.1 & 26.34\scalebox{0.9}{$\pm$0.77} & 1989.5 & 4.65\scalebox{0.9}{$\pm$0.11} & 10337.5& 23.38\scalebox{0.9}{$\pm$0.11} & 4.6 \\
			FedRolex & 93.73\scalebox{0.9}{$\pm$1.44} & 1.1 & 33.59\scalebox{0.9}{$\pm$1.91} & 380.1 & 26.28\scalebox{0.9}{$\pm$1.02} & 1989.5 & 4.82\scalebox{0.9}{$\pm$0.12} & 10337.5 & 23.84\scalebox{0.9}{$\pm$0.25} & 4.6 \\
			FedMP & 91.63\scalebox{0.9}{$\pm$0.90} & 1.7 & 28.99\scalebox{0.9}{$\pm$0.78} & 532.0 & 30.36\scalebox{0.9}{$\pm$0.95} & 2373.5 & 5.40\scalebox{0.9}{$\pm$0.16} & 16019.2 & 23.57\scalebox{0.9}{$\pm$0.87} & 6.3 \\
			DepthFL & 95.44\scalebox{0.9}{$\pm$0.53} & 2.2 & 29.72\scalebox{0.9}{$\pm$1.06} & 725.8 & 20.41\scalebox{0.9}{$\pm$0.52} & 3431.6 &3.83\scalebox{0.9}{$\pm$0.13} & 23565.1 & 23.11\scalebox{0.9}{$\pm$0.16} & 4.9 \\
			\midrule
			Ditto & 92.00\scalebox{0.9}{$\pm$0.30} & 5.6 & 82.60\scalebox{0.9}{$\pm$0.29} & 1650.8 & 49.01\scalebox{0.9}{$\pm$0.19} & 9868.6 & 11.92\scalebox{0.9}{$\pm$0.04} & 54162.0 & 23.97\scalebox{0.9}{$\pm$0.24} & 21.0 \\
			FedPer & 93.45\scalebox{0.9}{$\pm$0.07} & 2.8 & 79.76\scalebox{0.9}{$\pm$0.15} & 825.4 & 60.43\scalebox{0.9}{$\pm$0.08} & 4934.3 & 14.96\scalebox{0.9}{$\pm$0.10} & 27081.0 & 24.64\scalebox{0.9}{$\pm$0.06} & 10.5 \\ 
			FedRep & 91.92\scalebox{0.9}{$\pm$0.30} & 2.8 & 80.87\scalebox{0.9}{$\pm$0.09} & 825.4 & 45.36\scalebox{0.9}{$\pm$0.53} & 4934.3 & 15.18\scalebox{0.9}{$\pm$0.06} & 27081.0 & 24.43\scalebox{0.9}{$\pm$0.07} & 10.5 \\
			Per-FedAvg & 93.46\scalebox{0.9}{$\pm$0.33} & 2.8 & 85.00\scalebox{0.9}{$\pm$0.09} & 660.3 & 58.57\scalebox{0.9}{$\pm$1.57} & 3256.6 & 16.60\scalebox{0.9}{$\pm$0.51} & 18956.7 & 24.65\scalebox{0.9}{$\pm$0.42} & 9.7 \\
			LotteryFL &  93.37\scalebox{0.9}{$\pm$1.28} & 1.8 & 75.71\scalebox{0.9}{$\pm$1.00} & 656.7 & 44.83\scalebox{0.9}{$\pm$0.58} & 3276.9 & 11.11\scalebox{0.9}{$\pm$0.21} & 16454.3 & 24.33\scalebox{0.9}{$\pm$0.02} & 6.2 \\
			Hermes &  94.05\scalebox{0.9}{$\pm$0.60} & 1.7 & 81.07\scalebox{0.9}{$\pm$1.58} & 565.9 & 47.09\scalebox{0.9}{$\pm$0.81} & 3547.7 & 11.52\scalebox{0.9}{$\pm$0.10} & 17168.1 & 24.89\scalebox{0.9}{$\pm$0.10} & 6.1 \\
			FedSpa &  93.94\scalebox{0.9}{$\pm$0.85} & 1.4 & 78.83\scalebox{0.9}{$\pm$2.41} & 522.9 & 55.76\scalebox{0.9}{$\pm$1.02} & 3057.5 & 9.20\scalebox{0.9}{$\pm$0.12} & 16229.8 & 24.62\scalebox{0.9}{$\pm$0.04} & 5.3  \\
			FedP3 & 92.41\scalebox{0.9}{$\pm$0.34} & 1.0 & 78.46\scalebox{0.9}{$\pm$0.16} & 320.1 & 54.20\scalebox{0.9}{$\pm$0.19} & 1951.6 & 17.69\scalebox{0.9}{$\pm$0.05} & 13093.1 & 23.91\scalebox{0.9}{$\pm$0.09} & 4.6 \\
			\midrule
			\textbf{FedLPS} & \textbf{96.77\scalebox{0.9}{$\pm$0.11}} & \textbf{0.8} & \textbf{87.43\scalebox{0.9}{$\pm$0.19}} & \textbf{268.5} & \textbf{62.80\scalebox{0.9}{$\pm$0.05}} & \textbf{1706.7} & \textbf{21.48\scalebox{0.9}{$\pm$0.12}} & \textbf{9307.3} & \textbf{26.17\scalebox{0.9}{$\pm$0.04}} & \textbf{4.2} \\ 
			\bottomrule
	\end{tabular}}
	\label{main_results}
	\vskip -0.05in
\end{table*}

\paratitle{Models.} A CNN model with two convolutional layers is used for MNIST, and the VGG11 \cite{VGG} is adopted as the backbone of CIFAR10. CIFAR100 and Tiny-Imagenet are trained on VGG13 and VGG16. We employ a RNN model with two LSTM layers and a softmax layer \cite{leaf_benchmark_2018} for Reddit.

\paratitle{Baselines.} We compare against various FL frameworks, including
(1) \emph{conventional FL}, (2) \emph{heterogeneous sparse-training FL}, and (3) \emph{personalized FL}. For \underline{\emph{conventional FL}} frameworks with the same size models on all clients, we consider that: 
\begin{itemize}
	\setlength{\itemsep}{0pt}
	\setlength{\parsep}{0pt}
	\setlength{\parskip}{0pt}
	\item \emph{FedAvg} \cite{fedavg_2017} and \emph{FedProx} \cite{FedProx_2020} are classic FL frameworks, which require clients to locally train dense models and upload all updates for average aggregation. 
	\item \emph{Oort} \cite{Oort_2021} and \emph{REFL} \cite{REFL_2023} explore intelligent client selection in heterogeneous FL. Oort ignores local data diversity, while REFL alleviates the diversity issue by diversity measure and staleness-aware aggregation, where the stale updates negatively impact accuracy and convergence.
	\item \emph{PruneFL} \cite{PruneFL_2022} and \emph{CS} \cite{CS_2023} are state-of-the-art FL sparsification methods, where clients inherit the same sparse ratio and train submodels with the same sizes. PruneFL requires a powerful client for initial dense model sparsification and then distributes the sparse model to all clients for joint learning. CS applies unstructured sparsification in FL, limited by the specialized hardware requirement.
\end{itemize}
In terms of \underline{\emph{heterogeneous sparse-training FL}} frameworks,
\begin{itemize}
	\item \emph{Fjord} \cite{Fjord_2021}, \emph{HeteroFL} \cite{HeteroFL_2021}, and \emph{FedRolex} \cite{FedRolex_2022} directly select sparse ratios based on local capabilities and prune model units in an ordered manner.
	\item \emph{DepthFL} \cite{DepthFL_2023} tailors local models by removing the deepest layers under the resource-based sparse ratios.
	\item \emph{FedMP} \cite{FedMP_2022} explores extended UCB to select sparse ratio and then prunes model units based on magnitude.
\end{itemize}
Among \underline{\emph{personalized FL}} methods, we compare that: 
\begin{itemize}
	\item \emph{FedPer} \cite{FedPer_2019} and \emph{FedRep} \cite{FedRep_2021} view the last layers of local models as personalized modules that are not uploaded.
	\item \emph{Ditto} \cite{ditto_PFL_2021} introduces regularization into the local training for robust personalization.
	\item \emph{PerFedAvg} \cite{Per_FedAvg_2020} studies the personalized variant of FedAvg within model-agnostic meta-learning.
	\item \emph{LotteryFL} \cite{LotteryFL_2021} and \emph{Hermes} \cite{Hermes_2021} gradually decline sparse ratios from 1 at a fixed rate and prune a fixed number of low-magnitude weights to customize local sparse models.
	\item \emph{FedSpa} \cite{FedSpa_2022} introduces dynamic sparse training into FL to learn an always-sparse personalized model with a uniform sparse ratio for each client.
	\item \emph{FedP3} \cite{FedP3_2024} integrates global and local pruning strategies and allows personalization based on the client resource constraints, without involving pattern optimization.
\end{itemize}

\paratitle{Configurations.} For heterogeneous clients, we consider five capability levels $z_k \in \{1, 1/2, 1/4, 1/8, 1/16\}$ and uniformly sample from possible levels for $K$ clients, referring to \cite{HeteroFL_2021}. We set the optimal device (\emph{i.e.}, $z_k=1$) with Adreno 630, which has the computation capability of 727G floating-point operations per second \cite{Adreno630}. During the training, the local available resources can dynamically change, since users also have other irregular task requirements that may bring the changes of available power. The total number of clients is $K=100$ for MNIST and Reddit, and $K=50$ is set for other datasets. The number of communication rounds is $R=100$. In each round, the server randomly selects 10 clients. During local training, the batch size is set to 20. SGD optimizer is adopted with $0.1$ learning rate for image classification tasks and $8$ for next-word prediction, where gradient clip is involved as referring to \cite{leaf_benchmark_2018}. For (\ref{our_loss}), we set $\mu = 1$ and $\lambda = 1$. The utility function used in (\ref{reward}) is defined to be $U(x) = 10- \frac{20}{1+e^{(0.35x)}}$. Our code is available at \url{https://github.com/sunnyxuejj/FedLPS}.

\paratitle{Evaluation Metrics.} We adopt test accuracy and FLOPs as the main evaluation metrics. For FLOPs, many FL works \cite{DisPFL_2022, PruneFL_2022, FedSpa_2022} use it to characterize local computation costs, where fewer FLOPs mean less training overhead. Generally, FLOPs and running time are positively correlated in certain settings \cite{PruneFL_2022, flops_time_2021}. The total time cost is also evaluated in our experiments.

\begin{figure*}[htbp]
	\centering
	\includegraphics[width=0.93\textwidth]{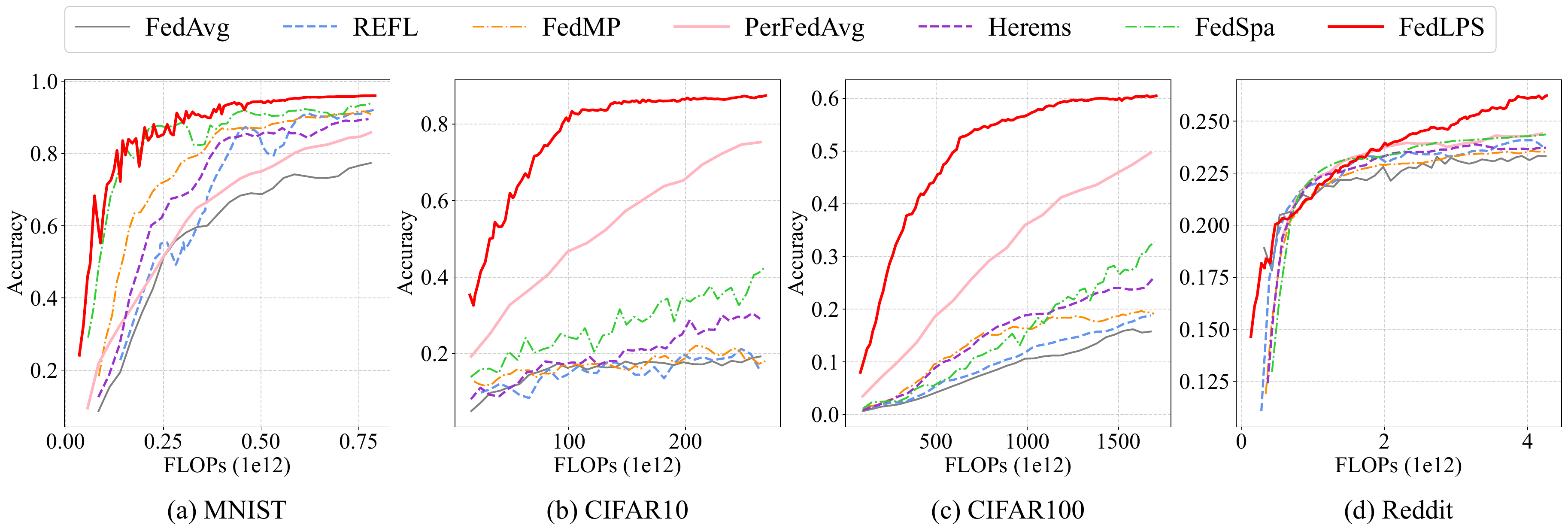}
	\caption{Test accuracy versus FLOP computation costs on four datasets.}
	\label{acc_flops}
	\vskip -0.1in
\end{figure*}

\begin{figure*}[htbp]
	\centering
	\includegraphics[width=0.93\textwidth]{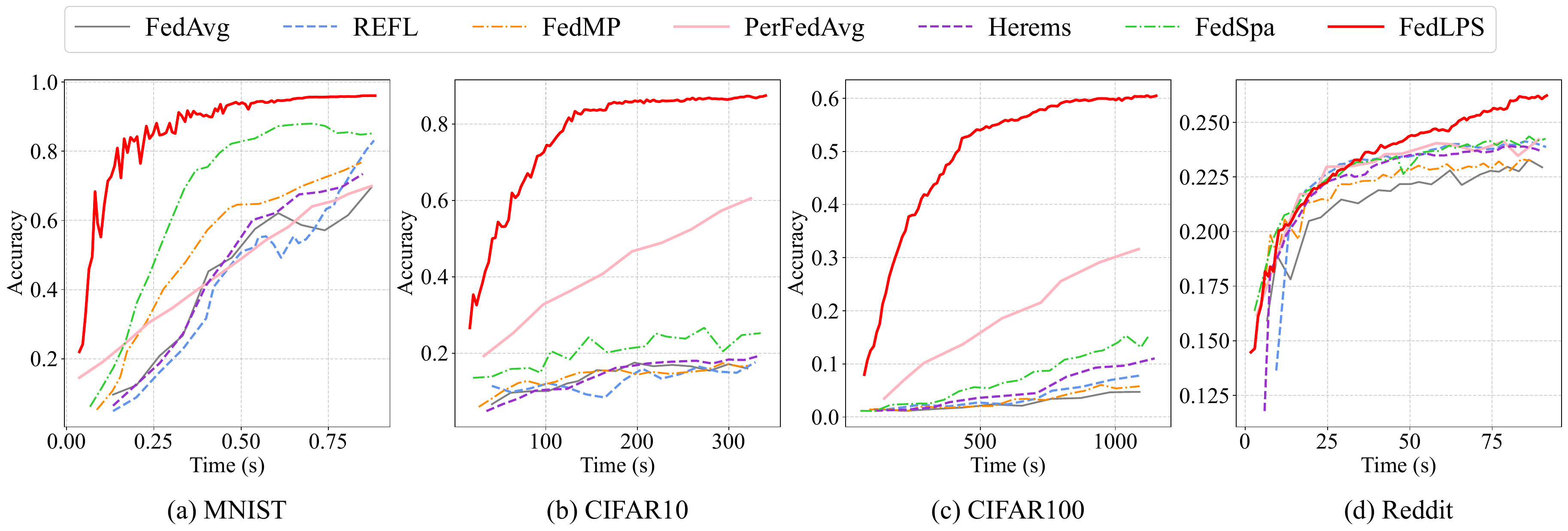}
	\caption{Test accuracy versus running time on four datasets.}
	\label{acc_times}
	\vskip -0.1in
\end{figure*}

\subsection{Performance Comparison (Q1)}

Table \ref{main_results} summarizes experimental results, where \emph{Acc} means the average accuracy of all clients on local test data, and \emph{FLOPs} denote total floating operations of all clients during the FL process. It is remarkable that FedLPS consistently achieves the start-of-the-art accuracy with minimal computation costs. 

First, conventional FL frameworks FedAvg and FedProx generally perform poorly on all datasets. Although Oort and REFL mitigate the heterogeneity bottleneck by adaptive client selection, they still suffer from accuracy degradation in non-IID settings. Compared to REFL, FedLPS achieves 1.72\%-54.33\% accuracy gain with 38.46\%-58.18\% FLOP reduction. Besides, recent sparse works (based on conventional FL) PruneFL and CS also reduce local FLOPs, but still show lower accuracy. Compared to CS, FedLPS improves accuracy by 4.33\%-59.34\% while reducing 20.75\%-42.86\% FLOP costs.

Second, heterogeneous sparse FL methods significantly outperform conventional FL. A major merit is that they assign submodels with different sizes based on client-side computing power. However, they finally deploy a shared inference model on all clients, resulting in lower accuracy in non-IID settings. In particular, FedLPS outperforms Hermes in test accuracy and training costs, which enhances accuracy by 1.28\%-15.71\% and reduces computation costs by more than 30\%. Compared to the advanced FedRolex, FedLPS achieves 2.33\%-53.84\% accuracy gains and saves up to 29\% FLOP computation costs.

Furthermore, personalized FL methods tailor a local model for each client to fit non-IID data and yield better inference accuracy. Combined with sparsification techniques, LotteryFL, Herems, FedSpa, and FedP3 learn personalized sparse models with fewer computation costs. Finally, FedLPS outperforms state-of-the-art personalized FL methods in inference accuracy and computation overhead on all datasets. Specifically, FedLPS provides 1.28\%-4.23\% accuracy gains with up to 60\% reduction of FLOPs.

\begin{figure}
	\begin{minipage}[t]{0.48\linewidth}
		\centering
		\centerline{\includegraphics[width=\columnwidth]{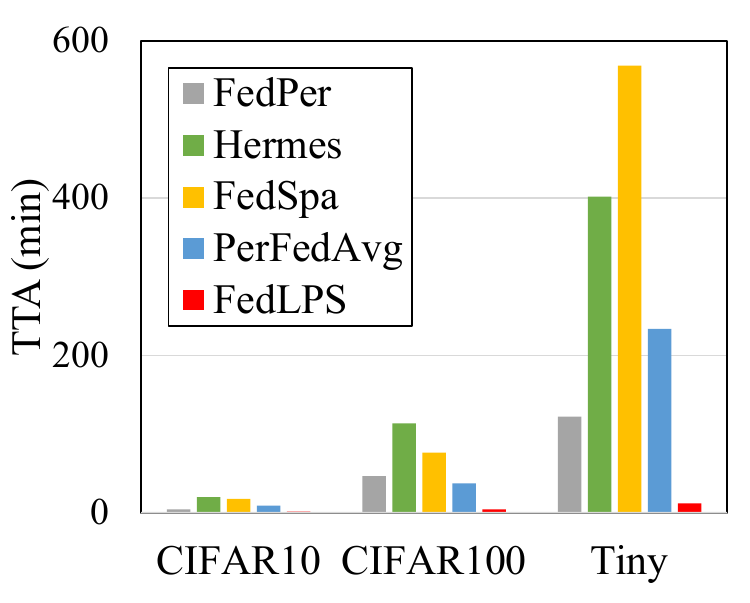}}
		\caption{TTA on CIFAR10, CIFAR100, and Tiny-Imagenet.}
		\label{TTA}
	\end{minipage}
	\hspace{0.04in}
	\begin{minipage}[t]{0.48\linewidth}
		\centering
		\centerline{\includegraphics[width=\columnwidth]{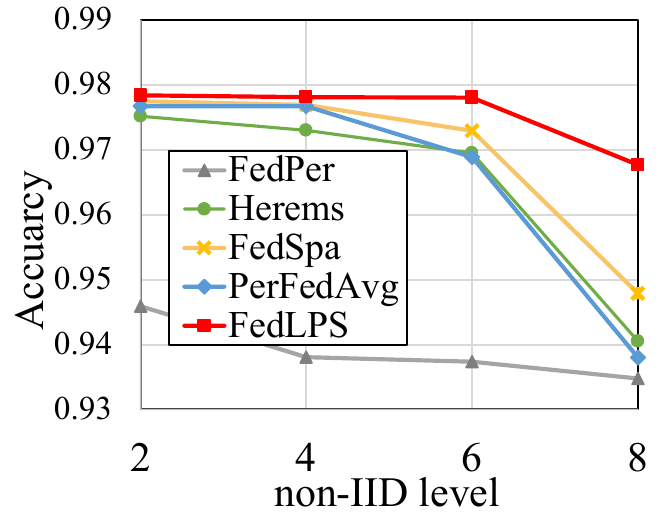}}
		\caption{ Accuracy versus non-IID levels on MNIST.}
		\label{mnist_diff_nonIID}
	\end{minipage}%
	\vskip -0.15in
\end{figure}

\subsection{Convergence and Running Time Evaluation (Q1 and Q2)}
To evaluate the convergence of FedLPS, we report the test accuracy varying with total FLOPs and running time. As shown in Figure \ref{acc_flops}, FedLPS generally offers higher accuracy under the same computation costs. Noticeably, within $1500 \times 10^{12}$ FLOPs, the accuracy of FedLPS in the last three rounds over CIFAR100 is 59.86\%, while PerFedAvg and FedSpa provide 47.46\% and 29.27\% test accuracy. Besides, we observe that FedLPS can quickly converge to a target accuracy with less time than other methods, as shown in Figure \ref{acc_times}. 

We adopt Time-To-Accuracy (TTA) to represent the running time required to reach target accuracy, motivated by \cite{FedBIAD_2023}. As shown in Figure \ref{TTA}, FedLPS consistently takes the shortest time to reach the target accuracy. For CIFAR10, FedLPS achieves 70\% test accuracy after 90.74s, which reduces more than 68\% (vs. 290.83s) running time compared to baselines. For CIFAR100, FedLPS reaches 40\% accuracy by 281.81s, which provides more than 80\% (vs. 2261.89s) time saving.

\begin{figure}
	\centering
	\includegraphics[width=0.93\columnwidth]{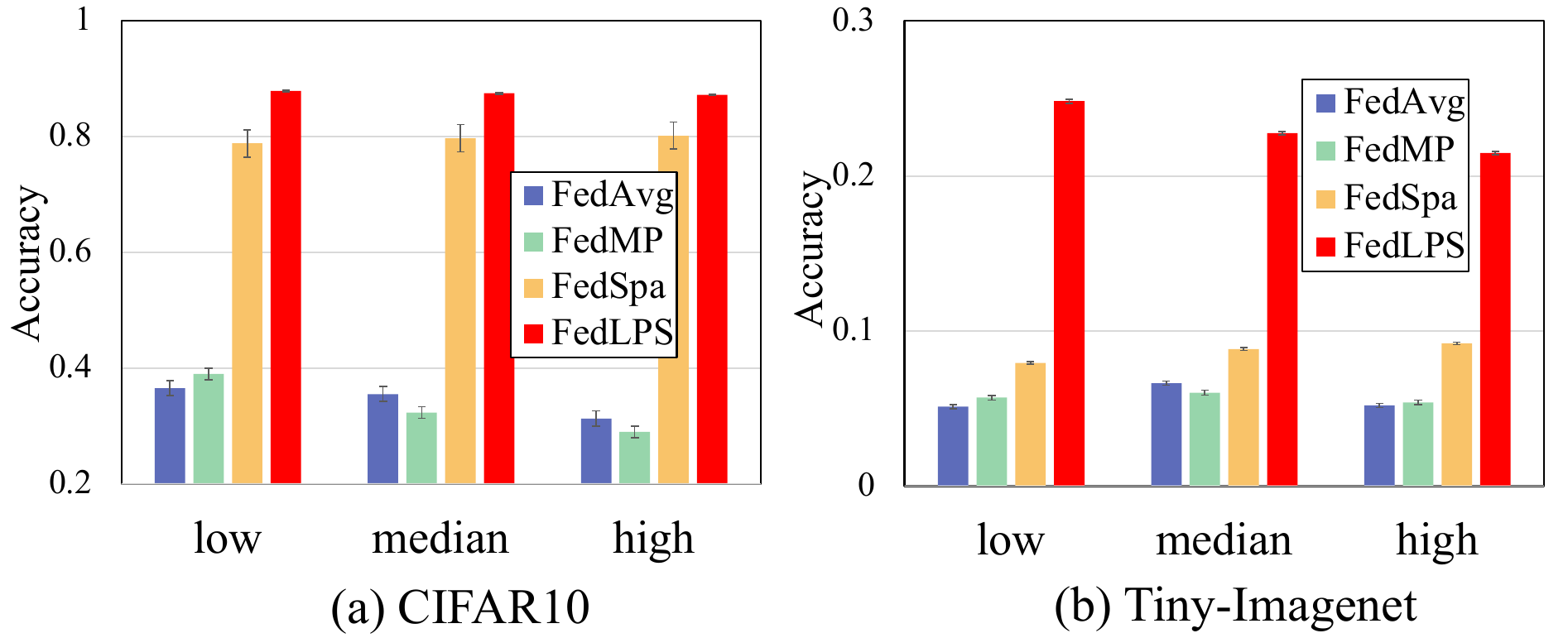}
	\caption{Test accuracy under different system heterogeneity levels on CIFAR10 and Tiny-Imagenet.}
	\label{diff_herero_acc}
	\vskip -0.1in
\end{figure}
\begin{figure}
	\centering
	\includegraphics[width=0.93\columnwidth]{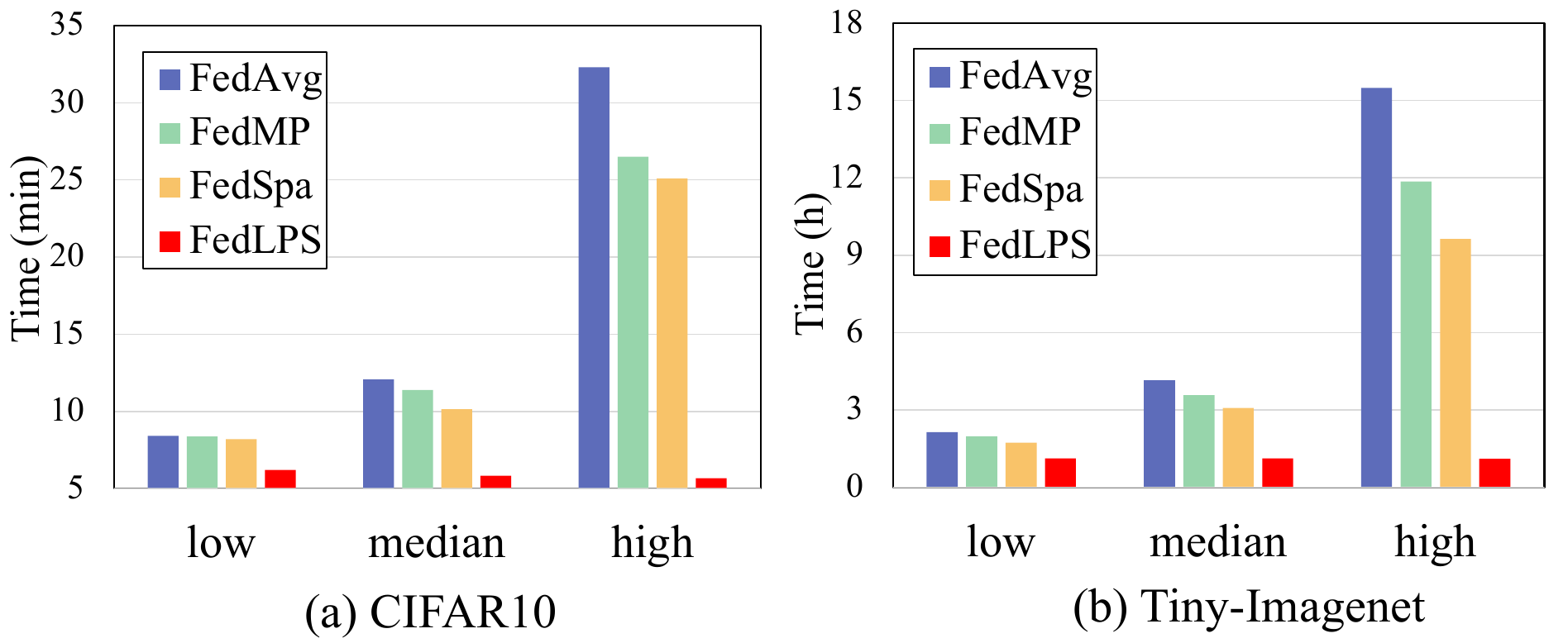}
	\caption{Time results under different system heterogeneity levels on CIFAR10 and Tiny-Imagenet.}
	\label{diff_herero_time}
	\vskip -0.1in
\end{figure}

\subsection{Effect of Heterogeneity (Q3)}

We also conduct experiments to evaluate FedLPS and several personalized baselines under different non-IID levels. The results on MNIST are shown in Figure \ref{mnist_diff_nonIID}, where the horizontal axis $x$ implies that each client lacks $x$ classes training data and the larger $x$ indicates the higher non-IID level. As the non-IID level rises, the average test accuracy of different methods gradually declines. Apparently, FedLPS always outperforms baselines under different non-IID levels and the accuracy advantages are more significant on higher non-IID levels. 

Furthermore, the effect of various heterogeneity levels is also explored. We consider three heterogeneity levels: low ($z_k \in\{1, 1/2\}$), median ($z_k \in\{1, 1/2, 1/4\}$), and high ($z_k \in\{1, 1/2, 1/4, 1/8, 1/16\}$). As shown in Figure \ref{diff_herero_acc} and Figure \ref{diff_herero_time}, FedLPS consistently keeps higher accuracy and greater time advantage, which enhances accuracy by 7.08\%-55.95\% and 12.28\%-19.69\% in CIFAR10 and Tiny-Imagenet compared to the other methods.
From low to high, the running time of three baselines increases accordingly, while FedLPS can basically remain stable and shorten 24.46\%-88.46\% running time, demonstrating the reliability and efficiency of FedLPS.

\begin{table}[]
	\centering
	\renewcommand{\arraystretch}{1.1}
	\caption{The results of ablation experiments.
	}
	\setlength{\tabcolsep}{1mm}{
		\begin{tabular}{l|cc|cc|cc}
			\toprule
			\multirow{4}{*}{Methods}& \multicolumn{2}{c|}{{MNIST}}  &  \multicolumn{2}{c|}{{CIFAR10}} & \multicolumn{2}{c}{{Reddit}} \\
			\cmidrule{2-7}
			& Acc & FLOPs & Acc & FLOPs & Acc & FLOPs \\
			& (\%) & (1e12) & (\%) & (1e12) & (\%) & (1e12) \\
			\midrule
			FLST & 94.76\scalebox{0.6}{$\pm$0.06} & 1.4 & 87.32\scalebox{0.6}{$\pm$0.26} & 412.9 & 24.85\scalebox{0.6}{$\pm$0.07} & 5.3 \\
			\midrule
			RCR-Fix & 94.95\scalebox{0.6}{$\pm$0.20} & 1.1 &  78.84\scalebox{0.6}{$\pm$0.10} & 394.3& 23.39\scalebox{0.6}{$\pm$0.56} & 5.1 \\
			\textbf{P-UCBV-Fix} & \textbf{95.72\scalebox{0.6}{$\pm$0.26}} & \textbf{0.9} & \textbf{80.71\scalebox{0.6}{$\pm$0.44}} & \textbf{325.0} & \textbf{23.71\scalebox{0.6}{$\pm$0.08}} & \textbf{3.7} \\
			\midrule
			RCR-Dyn & 95.99\scalebox{0.6}{$\pm$0.12} & 1.1 & 86.06\scalebox{0.6}{$\pm$0.12} & 380.1 & 25.20\scalebox{0.6}{$\pm$0.06} & 4.6 \\
			\textbf{P-UCBV-Dyn} & \textbf{96.77\scalebox{0.6}{$\pm$0.11}} & \textbf{0.8} & \textbf{87.43\scalebox{0.6}{$\pm$0.19}} & \textbf{268.5} & \textbf{26.17\scalebox{0.6}{$\pm$0.04}} & \textbf{4.2} \\
			\bottomrule
	\end{tabular}}
	\label{ablation_results}
	\vskip -0.05in
\end{table}

\begin{figure}
	\centering
	\subfloat[Test accuracy versus sparse ratio by different sparsification strategies on MNIST (left) and Reddit (right).]
	{
		\includegraphics[width=0.98\columnwidth]{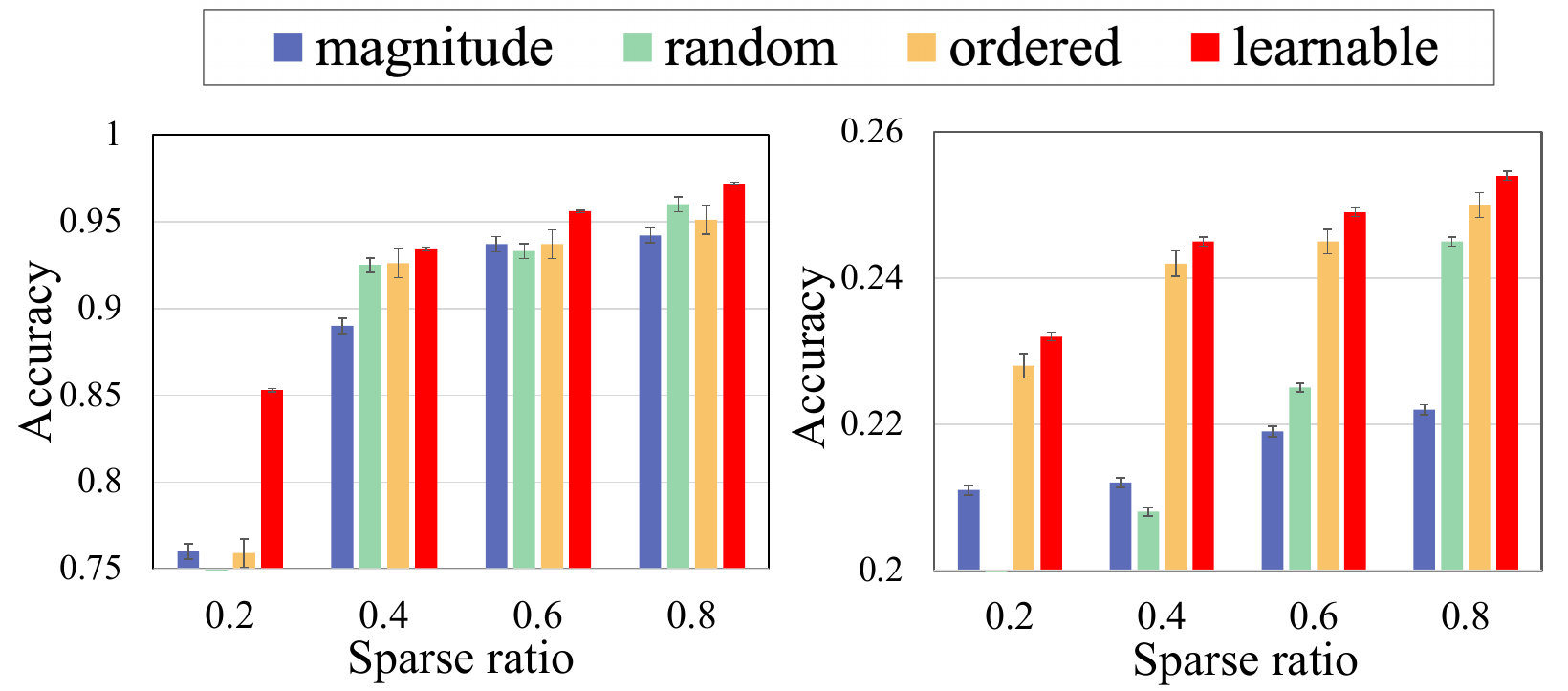}
		\label{diff_ratio}
	} 
	\quad
	\subfloat[Time results of our learnable sparsification on MNIST (left) and Reddit.]
	{
		\includegraphics[width=0.98\columnwidth]{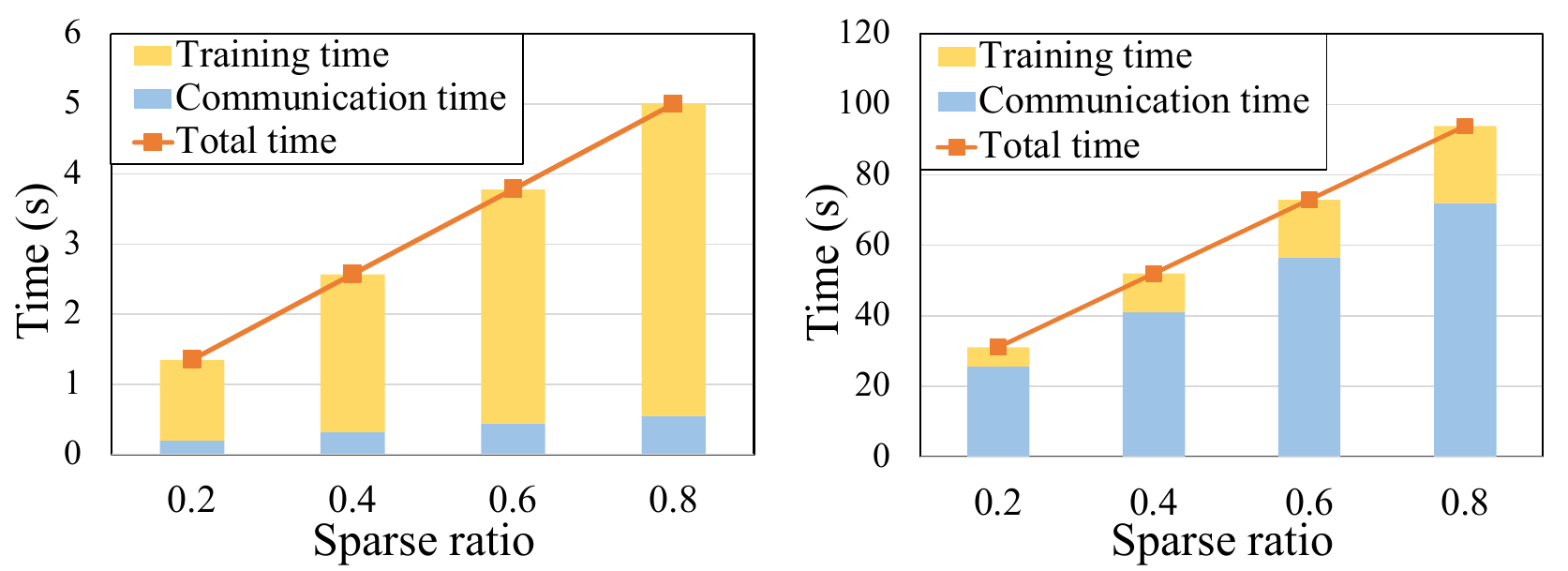}
		\label{diff_ratio_time}
	}
	\caption{Accuracy and time results under different sparse ratios.}
	\label{diff_ratio_results}
	\vskip -0.1in
\end{figure}

\subsection{Ablation Study (Q4)}
We further investigate the effect of learnable personalized patterns on a fixed sparse ratio and verify the advantage of adaptive ratios decided by P-UCBV. First, we set a fixed ratio $s_k=0.5$ for all clients and learn personalized sparse patterns through our Learnable Sparse Training (\emph{i.e.}, FLST). The results in Table \ref{ablation_results} show that FLST generally offers higher accuracy than the start-of-the-art sparse FL works CS and FedSpa (see Table \ref{main_results}, where CS and FedSpa also adopt the same ratio $s_k=0.5$) under the same FLOPs, revealing the vital contribution of our learnable personalized pattern. To quantitatively verify the advantages of learnable sparsification, we compare against existing heuristic sparsifications (including random, ordered, and magnitude-based strategies) under different sparse ratios, where the sparse ratio is set to $s_k\in\{0.2, 0.4, 0.6, 0.8\}$ for all clients. As shown in Figure  \ref{diff_ratio}, FedLPS consistently outperforms other methods under various ratio settings, demonstrating the reliability of our learnable strategy. Besides, we observe from Figure \ref{diff_ratio_time} that as the ratio increases, the test accuracy is enhanced while running time also increases, indicating the significance of the ratio selection.

To verify the advantages of P-UCBV in ratio decisions, we compare it with a rigid Resourced-Controlled Ratio (RCR) rule under fixed and dynamic system heterogeneity levels. Sparse ratios are directly set according to local computation capabilities in RCR, which is used in \cite{FedRolex_2022, Fjord_2021}, and \cite{HeteroFL_2021}. As shown in Table \ref{ablation_results}, RCR shows lower accuracy and more FLOPs than P-UCBV, which indicates that our P-UCBV significantly reduces training costs while guaranteeing inference accuracy.

\section{Related Work}

\paratitle{Model Sparsification.} Modern DNNs typically contain millions of parameters, requiring high computing power for training \cite{model_computing_2018}. For resource-limited edge devices, it is infeasible to completely train dense models. Hence, for lightening models, sparsification has attracted widespread attention \cite{Network_Pruning_2019, unstructured_sparse_2020, chen_prune_2022, Chen2023SparseMA}, which can be characterized as unstructured and structured. \emph{Unstructured sparsification} simply zeros out partial parameters without considering structures of DNNs, where sparse parameter matrices are too irregular to accelerate training on commodity hardware \cite{unstrcture_speed_2016}. Although several works \cite{FedDST_2022, CS_2023} apply unstructured sparsification into FL, they require specialized hardware/libraries on clients and cannot adapt to diverse configurations \cite{FedMP_2022}. On the contrary, \emph{structured sparsification} \cite{unstrcture_speed_2016, structure_sparsity_2018} takes DNN structures (\emph{i.e.}, units) into account and implements structure-wise pruning. The remaining part after structured sparsification is regular and can be viewed as a submodel, thereby enabling training speed-ups on general-purpose hardware. Federated Dropout \cite{FD_2018} (FD) first introduces structured sparsification into FL for MEC, but it adopts uniform sparse ratios and random sparse patterns, ignoring ratio determinations and incurring precision sacrifices.

\paratitle{System Heterogeneity.} System heterogeneity is one of the primary challenges of edge data management and computing. Conventional FL requires global and local models to share the same architecture \cite{client_select_2019}, where low-tier clients are excluded from training, leading to training bias and accuracy degradation \cite{FL_survey}. Knowledge distillation \cite{FedMD_2019, KD_HFL_2021} supports training different models, but requires public datasets for fine-tuning, which violates the data localization and adds extra overhead \cite{HeteroFL_2021}. Several works explore sparsification to extract different submodels. The common idea is to assign sparse models with different sizes based on local resources, where sparse ratios determine model sizes. FedDrop \cite{FedDrop_2022} adjusts sparse ratios depending on bandwidth limitation, while Fjord \cite{Fjord_2021} selects ratios based on local computing power. They focus on the association between sparse ratios and resources without involving local data, resulting in unguaranteed performance in non-IID settings. For sparse pattern selection, existing works can be roughly categorized into depth and width scaling \cite{AdaptiveFL_2024}. Depth scaling \cite{InclusiveFL_2022, DepthFL_2023} constructs heterogeneous models by removing several deepest layers. Such layer-wise sparsification cannot support fine-grained adjustments of patterns and is more prone to larger accuracy fluctuations. In contrast, width scaling tailors sparse models by pruning neurons/channels. Fjord \cite{Fjord_2021}, HeteroFL \cite{HeteroFL_2021}, and FedRolex \cite{FedRolex_2022} tailor sparse models in an ordered manner, meaning that adjacent units are dropped out first in these works. While FedMP \cite{FedMP_2022} prunes the units with lower magnitude. The above ordered and magnitude-based sparsifications are strongly heuristic \cite{STR2022} and cannot flexibly customize sparse patterns. Finally, a shared model is obtained for inference on all clients, failing to generalize well on clients with non-IID data \cite{PFL_survey_2021}.

\paratitle{Sparsification with Statistical Heterogeneity.} The non-IID problem is common in EDM and MEC due to the distinct preferences of edge users \cite{non-IID_problem_2023}. To tackle the problem, several FL works propose to embed personalized information into client-side sparse models. Hermes \cite{Hermes_2021} and LotteryFL \cite{LotteryFL_2021} heuristically prune lower-magnitude parameters or units to tailor a personalized sparse model for each client. However, the magnitude cannot accurately reflect the importance of model units over local data, resulting in accuracy degradation. Moreover, they take dense-to-sparse rules, gradually decreasing sparse ratios at a fixed rate after reaching preset accuracy thresholds. In this way, weaker clients are still required to train larger scaling submodels and suffer from unbearable training loads, which delays the FL progress and even induces stragglers. Sparse-to-sparse techniques are explored in FedSpa \cite{FedSpa_2022}, where each client shifts the submodel regularly by the iterative cuts of the lowest magnitude parameters and random growth of other connections during local training, where heuristic magnitude-based pattern adjustment also cannot guarantee accuracy. Besides, FedSpa restricts clients to hold the same and constant sparse ratio, unable to be applied in system-heterogeneous MEC scenarios. Although the state-of-the-art FedP3 \cite{FedP3_2024} considers system heterogeneity, it still adopts heuristic sparse patterns (\emph{i.e.,} uniform and ordered dropout).



\section{Conclusion}

There has been a growing interest in distributed data management at network edges to utilize data in a real-time and privacy-preserving way. In this paper, we focus on the challenges of statistical and system heterogeneity in such edge scenarios, and propose Learnable Personalized Sparsification for heterogeneous Federated learning (FedLPS), which facilitates the learnable customization of sparse patterns and sparse ratios to boost model performance and computing efficiency. We integrate the importance of model units into local loss and learn the importance on local data to tailor importance-based sparse patterns with minimal heuristics, which can accurately extract personalized data features. Furthermore, P-UCBV learns the additive effect of diverse capabilities and non-IID data via the MAB modeling and superimposed feedback designing to adaptively determine sparse ratios for the trade-off between capability adaption and accuracy improvement. We conduct extensive experiments on typical datasets with various models, and the results show that FedLPS significantly outperforms the state-of-the-art method in test accuracy and training efficiency.

Our learnable idea inspires that pending indicators might be learned together with models in a non-heuristic way, which provides a future direction for learnable hyperparameter customization in edge data management and computing.

\section*{Acknowledgements}
This article is supported in part by the National Key Research and Development Program of China (No. 2021YFB2900102); in part by the National Natural Science Foundation of China (NSFC) under Grants 62472410 and 62072436. Additionally, Jingyuan Wang's work was partially supported by NSFC (No. 72171013, 72222022, and 72242101)

\footnotesize
\balance
\bibliographystyle{IEEEtran}
\bibliography{FedLPS_cr.bib}

\end{document}